\newtheorem{theorem}{Theorem}
\newtheorem{lemma}[theorem]{Lemma}
\newtheorem{corollary}[theorem]{Corollary}
\numberwithin{subcase}{case}
\numberwithin{subsubcase}{subcase}
\author{Brett Leroux, Luis Rademacher}
\def\keywords#1{\par\addvspace\medskipamount{\rightskip=0pt plus1cm
\def\and{\ifhmode\unskip\nobreak\fi\ $\cdot$
}\noindent{Keywords:}\enspace\ignorespaces#1\par}}
\title{Euclidean distance compression via deep random features}
\def\final{1}  
\newcommand{\lnote}[1]{[{\small Luis: \bf #1}]}
\newcommand{\bnote}[1]{[{\small Brett: \bf #1}]}
\newcommand{\anonnote}[1]{[{\small anon: \bf #1}]}
\newcommand{\sidecomment}[1]{\marginpar{\tiny #1}}
\newcommand{\details}[1]{{\color{blue}\ [[#1]] }}
\newcommand{\lnote}[1]{}
\newcommand{\bnote}[1]{}
\newcommand{\anonnote}[1]{}
\newcommand{\sidecomment}[1]{}
\newcommand{\details}[1]{}
\newcommand{\Rl}{\operatorname{\mathbb{R}}}
\newcommand{\Tor}{\operatorname{\Tor}}
\newcommand{\sign}{\operatorname{sign}}
\newcommand{\suchthat}{\mathrel{:}}
\newcommand{\e}{\operatorname{\mathbb{E}}}
\newcommand{\prob}{\mathbb{P}}
\newcommand{\norms}[1]{{\lVert#1\rVert}^2}
\newcommand\fake@math{}
\def\fake@math#1\){[math]}
\newcommand{\abs}[1]{\lvert#1\rvert}
\newcommand{\NN}{\mathbb{N}}
\begin{document}
\maketitle

\begin{abstract}


Motivated by the problem of compressing point sets into as few bits as possible while maintaining information about approximate distances between points, we construct random nonlinear maps $\varphi_\ell$ that compress point sets in the following way. For a point set $S$, the map $\varphi_\ell:\Rl^d \to N^{-1/2}\{-1,1\}^N$ has the property that storing  $\varphi_\ell(S)$ (a \emph{sketch} of $S$) allows one to report pairwise squared distances between points in $S$ up to some  multiplicative  $(1\pm \epsilon)$ error with high probability as long as the minimum distance is not too small compared to $\epsilon$.  The maps $\varphi_\ell$ are the $\ell$-fold composition of a certain type of random feature mapping. Moreover, we determine how large $N$ needs to be as a function of $\epsilon$ and other parameters of the point set. 

Compared to existing techniques, our maps offer several advantages. The standard method for compressing point sets by random mappings relies on the Johnson-Lindenstrauss lemma which implies that if a set of $n$ points is mapped by a Gaussian random matrix to $\Rl^k$ with $k =\Theta(\epsilon^{-2}\log n)$, then pairwise distances between points are preserved up to a multiplicative $(1\pm \epsilon)$ error with high probability. The main advantage of our maps $\varphi_\ell$ over random linear maps is that ours map point sets directly into the discrete cube $N^{-1/2}\{-1,1\}^N$ and so there is no additional step needed to convert the sketch to bits. For some range of parameters, our maps $\varphi_\ell$ produce sketches which require fewer bits of storage space.
\end{abstract}

\section{Introduction}
Random projection is a commonly used method to lower the dimension of a set of points while maintaining important properties of the data \cite{Santosh}. The random projection method involves mapping a high-dimensional set of points in $\Rl^d$ to a lower dimensional subspace by some random projection matrix in such a way that the pairwise distances and inner products between the projected points are approximately equal to the pairwise distances and inner products between the original points. The random projection method has many applications to data analysis and prominent algorithms such as nearest neighbor search \cite{DBLP:conf/stoc/IndykM98}. 

The theoretical foundation of random projection is the Johnson-Lindenstrauss lemma which states that a random orthogonal projection to a lower dimensional subspace has the property of preserving pairwise distances and inner products \cite{JL}. Later it was observed \cite{MR1943859, DBLP:conf/stoc/IndykM98} that one can alternatively take the projection matrix to be a matrix with i.i.d.\ Gaussian $N(0,1)$ entries. 

\begin{lemma}[\cite{MR1943859, DBLP:conf/stoc/IndykM98,Santosh}]\label{lem:JL}
    Let each entry of an $d \times k$ matrix $R$ be chosen independently from $N(0,1)$. Let $v = \frac{1}{\sqrt{k}}R^Tu$ for $u \in \Rl^d$. Then for any $\epsilon>0$, $\prob\bigl(\bigl\lvert\|v\|^2-\|u\|^2 \bigr\rvert\ge \epsilon\|u\|^2\bigr)<2e^{-(\epsilon^2-\epsilon^3)k/4}$.
\end{lemma}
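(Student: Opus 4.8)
The plan is to reduce the statement to a chi-squared tail bound and then run a Chernoff argument. Since the claimed inequality is invariant under rescaling $u$, I would first assume without loss of generality that $\norm{u}=1$. Next I would observe that the random vector $R^Tu \in \Rl^k$ has independent coordinates, each distributed as $N(0,1)$: its $i$-th coordinate is $\sum_{j=1}^d R_{ji}u_j$, a linear combination of independent standard Gaussians, hence Gaussian with mean $0$ and variance $\sum_{j=1}^d u_j^2 = \norms{u} = 1$. Consequently $k\norms{v} = \norms{R^Tu} = \sum_{i=1}^k Z_i^2$ with $Z_1,\dots,Z_k$ i.i.d.\ $N(0,1)$, i.e., $k\norms{v}$ is a chi-squared random variable with $k$ degrees of freedom. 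It then suffices to show that each of $\prob\bigl(\sum_i Z_i^2 \ge (1+\epsilon)k\bigr)$ and $\prob\bigl(\sum_i Z_i^2 \le (1-\epsilon)k\bigr)$ is at most $e^{-(\epsilon^2-\epsilon^3)k/4}$.

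For the upper tail I would apply the exponential Markov inequality: for $0<t<1/2$,
\[
\prob\Bigl(\sum_{i=1}^k Z_i^2 \ge (1+\epsilon)k\Bigr) \le e^{-t(1+\epsilon)k}\,\e\Bigl[e^{t\sum_i Z_i^2}\Bigr] = e^{-t(1+\epsilon)k}(1-2t)^{-k/2},
\]
using the moment generating function $\e[e^{tZ_i^2}] = (1-2t)^{-1/2}$ of a single squared Gaussian. Minimizing the right side over $t$ (the minimizer is $t=\epsilon/(2(1+\epsilon))$) gives the bound $\bigl((1+\epsilon)e^{-\epsilon}\bigr)^{k/2}$. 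The lower tail is symmetric: for $t>0$, $\prob\bigl(\sum_i Z_i^2 \le (1-\epsilon)k\bigr) \le e^{t(1-\epsilon)k}(1+2t)^{-k/2}$ (here one may assume $\epsilon<1$, since otherwise the event has probability $0$), and optimizing at $t=\epsilon/(2(1-\epsilon))$ yields $\bigl((1-\epsilon)e^{\epsilon}\bigr)^{k/2}$.

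The final step is to bound these two expressions using elementary inequalities for the logarithm. From $\ln(1+\epsilon) \le \epsilon - \epsilon^2/2 + \epsilon^3/3$ one gets $\tfrac12\ln\bigl((1+\epsilon)e^{-\epsilon}\bigr) \le \tfrac12\bigl(-\epsilon^2/2 + \epsilon^3/3\bigr) \le -(\epsilon^2-\epsilon^3)/4$, and from $\ln(1-\epsilon) \le -\epsilon - \epsilon^2/2$ one gets $\tfrac12\ln\bigl((1-\epsilon)e^{\epsilon}\bigr) \le -\epsilon^2/4 \le -(\epsilon^2-\epsilon^3)/4$. Hence both tails are at most $e^{-(\epsilon^2-\epsilon^3)k/4}$, and a union bound over the two tail events gives the claimed $2e^{-(\epsilon^2-\epsilon^3)k/4}$.

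The reduction to a chi-squared variable and the Chernoff optimization are routine; I expect the only place needing care is the last paragraph, namely checking that the elementary logarithmic inequalities point in the right direction and combine to exactly the exponent $(\epsilon^2-\epsilon^3)/4$ stated in the lemma (in particular, that the cruder lower-tail bound $e^{-\epsilon^2 k/4}$ still implies the stated one).
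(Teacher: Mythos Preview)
Your argument is correct and is exactly the standard proof: reduce to a $\chi^2_k$ variable via rotational invariance of the Gaussian, apply the Chernoff method to each tail, and finish with the Taylor inequalities $\ln(1+\epsilon)\le \epsilon-\epsilon^2/2+\epsilon^3/3$ and $\ln(1-\epsilon)\le -\epsilon-\epsilon^2/2$. The paper itself does not supply a proof of this lemma at all; it simply quotes the result with citations, so there is no ``paper's proof'' to compare against beyond noting that your derivation is the one found in the cited sources. One small remark: for $\epsilon\ge 1$ the stated bound is vacuous (the right-hand side is at least $2$), so your implicit restriction to $\epsilon<1$ in the lower-tail step and in the use of $\ln(1-\epsilon)$ is harmless.
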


A corollary of the above lemma is that if an arbitrary set of $n$ points in $\Rl^d$ is mapped by the random projection matrix $R$ to $\Rl^k$ where $k = \Theta(\epsilon^{-2}\log n)$, then the squared distances between pairs of points are distorted by a factor of at most $(1\pm\epsilon)$ with high probability. The projected points are thus a lower dimensional representation of the original point set and this lowering of the dimension offers two main advantages. The first is that algorithms that were originally intended to be performed on the original point set can now instead be performed on the lower dimensional points, in many cases offering significant run-time speed ups while at the same time guaranteeing that the output of the algorithm on the low dimensional points is a good approximation to the output for the original points. This approximation guarantee is usually the result of the fact that the random projection preserves pairwise squared distances.

The second main advantage of the lower dimensional representation is a reduction in the cost of data storage. Let $S$ be a set of $n$ points in the unit ball in $\Rl^d$. 
The Johnson-Lindenstrauss lemma shows that random projection produces a data structure (called a \emph{sketch}) that can be used to report the squared distances between pairs of points in $S$ up to a multiplicative $(1\pm\epsilon)$ error. 
The size of the sketch of course depends on both $|S|$ and $\epsilon$. 
From this viewpoint, it is natural to ask what is the minimum number of bits of such a sketch? The Johnson-Lindenstrauss lemma gives an upper bound on the size of such a sketch as follows: 
Any set of $n$ points $S$ can be projected to $\Rl^k$ where $k = \Theta(\epsilon^{-2}\log n)$ while distorting pairwise squared distances by a factor of at most $(1\pm\epsilon)$. 
The projected data points are real-valued, and thus the number of bits required to store the projected data is not determined solely by the dimension of the random projection. 
Each coordinate of the projected data points needs to be encoded into bits in such a way that guarantees squared distances are preserved. 
One way to convert to bits is to use an epsilon-net for the unit ball in $\Rl^k$: 
In order to preserve squared distances up to a multiplicative $(1\pm\epsilon)$ error, it suffices to preserve squared distances up to an additive $m^2\epsilon$ error where $m$ is the minimum distance between pairs of points in $S$. 
By identifying each projected point with the closest point in an $m^2  \epsilon$-net, we can produce a sketch with $\Theta \bigl(n\epsilon^{-2}\log n\log(1/m^2\epsilon) \bigr)$ bits.\footnote{Since the points are in the unit ball, for any $\delta>0$, approximating distances to within additive error $\delta/6$ gives approximation of squared distances to within additive error $\delta$. So the size of the epsilon net is, up to a constant, $(1/m^2  \epsilon)^{\epsilon^{-2}\log n}$. This gives $\Theta\bigl( \log(1/m^2  \epsilon)\epsilon^{-2}\log n\bigr)$ bits per point}

While the Johnson-Lindenstrauss lemma shows that efficient sketches can be obtained by mapping the points to a lower dimensional space with a random linear mapping (the projection), it is natural to ask if there are other types of random maps (in particular, possibly nonlinear maps) which are able to produce sketches with a smaller number of bits. 
Our main result shows that this is possible in certain cases. We state our main result first for sets of points contained in the unit sphere $S^{d-1}$ and at the end of this section we include the extension to subsets of the unit ball.

\begin{restatable}{theorem}{sphere}\label{thm:sphere}
  Let $S \subset S^{d-1}$ with $|S|=n\ge 2$. Let $m = \min_{x,y \in S,x\neq y}\| x- y\|$ and $\ell = \lceil\log_2\log_2 \frac{4}{m} \rceil \ge 1$. 
  Let $\epsilon>0$ and assume that $\epsilon < \min_{x,y \in S,x\neq y}1-|\langle x,y \rangle|$. 
  Then the random map $\varphi_\ell:S^{d-1 } \to \frac{1}{\sqrt{N}} \{-1,1\}^N$ with $N = \Theta\bigl(\frac{\log n}{\epsilon^2} (\log \frac{1}{m} )^{2\log_2(\pi/\sqrt{2})} \bigr)$ (defined in the proof of \cref{thm:conditional2} and independent of $S$ except through parameters $n, d, m$ and $\min_{x,y \in S,x\neq y}1-|\langle x,y \rangle|$) satisfies the following with probability at least $(1-\frac{2}{n})^\ell$: $\varphi_\ell( S)$ is a sketch of $S$ that allows one to recover all squared distances between pairs of points in $S$ up to a multiplicative $(1\pm \epsilon)$ error. The number of bits of the sketch is $\Theta\bigl(\frac{n \log n}{\epsilon^2} (\log \frac{1}{m} )^{2\log_2(\pi/\sqrt{2})} \bigr)$. 
\end{restatable}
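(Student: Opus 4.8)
The plan is to take $\varphi_\ell$ to be an $\ell$-fold composition of an elementary \emph{random sign-feature map}, exploiting that composing such maps \emph{amplifies} short distances, so that after $\ell$ rounds every pairwise distance is of constant order — hence quantizable with few bits — and the original distances are recovered by undoing the deterministic distortion introduced at each round. For a target dimension $N'$ and input dimension $M$, sample independent standard Gaussians $w_1,\dots,w_{N'}\in\Rl^M$ and set $\psi(x)=\tfrac1{\sqrt{N'}}\bigl(\sign\langle w_1,x\rangle,\dots,\sign\langle w_{N'},x\rangle\bigr)\in\tfrac1{\sqrt{N'}}\{-1,1\}^{N'}$; since $\|\psi(x)\|=1$, $\psi$ carries one unit sphere into another and can be iterated. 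For $x,y\in S^{M-1}$ write $u=\tfrac12\|x-y\|^2=1-\langle x,y\rangle$ and put $g(u)=\tfrac2\pi\arccos(1-u)$; the random-hyperplane identity $\prob\bigl(\sign\langle w,x\rangle\neq\sign\langle w,y\rangle\bigr)=\tfrac1\pi\arccos\langle x,y\rangle$ gives $\e\,\tfrac12\|\psi(x)-\psi(y)\|^2=g(u)$, and, as $\tfrac12\|\psi(x)-\psi(y)\|^2$ is $\tfrac2{N'}$ times a sum of $N'$ i.i.d.\ Bernoulli variables, Hoeffding's inequality gives $\prob\bigl(\bigl|\tfrac12\|\psi(x)-\psi(y)\|^2-g(u)\bigr|\ge t\bigr)\le 2e^{-N't^2/2}$. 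I would then set $\varphi_\ell=\psi_\ell\circ\cdots\circ\psi_1$, where $\psi_k$ is a fresh independent sign-feature map into $\tfrac1{\sqrt{N_k}}\{-1,1\}^{N_k}$, with $N_1,\dots,N_\ell$ chosen below and $N:=N_\ell$; only $\varphi_\ell(S)$ is stored, so the intermediate $N_1,\dots,N_{\ell-1}$ do not enter the sketch size $nN$.

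\textbf{The dynamical system.} Near $0$ one has $g(u)=\tfrac{2\sqrt2}\pi\sqrt u\,(1+O(u))$, so a single round roughly replaces a short distance by its square root. Analyzing the recursion $u_k=g(u_{k-1})$ and substituting $\ell=\lceil\log_2\log_2\tfrac4m\rceil$ (so $2^\ell=\Theta(\log\tfrac1m)$) shows the smallest distance is amplified to constant order — $g^{(\ell)}(m^2/2)$ lies between two absolute constants in $(0,2)$ — and, by monotonicity of $g$, all other pairwise squared distances become $\Theta(1)$ as well. From the same recursion, together with the symmetry $g(2-u)=2-g(u)$ (to handle nearly-antipodal pairs), one gets the sensitivity estimate, uniform in $u_0$: using $g'(u)^{-2}=\tfrac{\pi^2}4u(2-u)\le\tfrac{\pi^2}2u$ and $\prod_{j=1}^{\ell-1}u_j=O(u_0)$ (the latter following, when $u_0$ is small, from the square-root recursion, and trivial otherwise), $\prod_{j=0}^{\ell-1}g'(u_j)^{-2}\le(\tfrac{\pi^2}2)^\ell\prod_{j=0}^{\ell-1}u_j=O(u_0^2(\tfrac{\pi^2}2)^\ell)$; and by the choice of $\ell$, $(\tfrac{\pi^2}2)^\ell=\Theta((\log\tfrac1m)^{2\log_2(\pi/\sqrt2)})$ since $\log_2(\pi^2/2)=2\log_2(\pi/\sqrt2)$.

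\textbf{Assembling the sketch.} For each pair choose tolerances $t_1,\dots,t_\ell$ with $t_k\big/\prod_{j=0}^{k-1}g'(u_j)$ geometrically increasing in $k$ and summing to at most $\epsilon u_0$, so $t_\ell=\Theta(\epsilon u_0\prod_{j=0}^{\ell-1}g'(u_j))$, and take $N_k=\Theta(\log n/t_k^2)$; by the sensitivity estimate these may be chosen uniformly over pairs, giving $N=N_\ell=\Theta(\tfrac{\log n}{\epsilon^2}(\tfrac{\pi^2}2)^\ell)=\Theta(\tfrac{\log n}{\epsilon^2}(\log\tfrac1m)^{2\log_2(\pi/\sqrt2)})$, the binding pair being the closest one ($u_0=m^2/2$). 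Now condition on the event that for every round $k$ and every pair $x\neq y$ one has $|\hat u_k-g(\hat u_{k-1})|\le t_k$, where $\hat u_k=\tfrac12\|\varphi_k(x)-\varphi_k(y)\|^2$ (so $\hat u_0=1-\langle x,y\rangle$): by the tail bound, a union bound over the $\binom n2$ pairs, and the choice of $N_k$, round $k$ is good with conditional probability at least $1-\tfrac2n$, so all $\ell$ rounds are good with probability at least $(1-\tfrac2n)^\ell$. On this event an induction on $k$ shows the perturbed iterate $\hat u_k$ stays in the region where $g$ and $g^{-1}(v)=1-\cos(\tfrac{\pi v}2)$ are defined and Lipschitz with controlled constants, and that the recovered value $\tilde u_0:=(g^{-1})^{\circ\ell}(\hat u_\ell)$ obeys $|\tilde u_0-u_0|\lesssim\sum_{k=1}^\ell t_k\big/\prod_{j=0}^{k-1}g'(u_j)\le\epsilon u_0$; hence $2\tilde u_0$ recovers $\|x-y\|^2=2u_0$ within a $(1\pm\epsilon)$ factor. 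The sketch $\varphi_\ell(S)$ consists of $n$ strings of $N$ bits, for $\Theta(\tfrac{n\log n}{\epsilon^2}(\log\tfrac1m)^{2\log_2(\pi/\sqrt2)})$ bits in all.

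\textbf{Main obstacle.} The delicate core is the error-propagation step: one must check that along the entire random, perturbed trajectory the iterates never leave the region where $g$ — hence $(g^{-1})^{\circ\ell}$ — is bi-Lipschitz with the claimed constants, and that the $O(u)$ corrections in $g(u)=\tfrac{2\sqrt2}\pi\sqrt u(1+O(u))$ accumulate to only a constant factor over the $\ell$ rounds; the hypothesis $\epsilon<\min_{x\neq y}1-|\langle x,y\rangle|$ is precisely what keeps the starting values $1-\langle x,y\rangle$ away from $0$ and $2$, where $g'$ blows up. The other point that needs care is the dynamical-systems bookkeeping — that the amplification reaches the $\Theta(1)$ regime after exactly $\ell$ rounds while the sensitivity product stays at order $u_0^2(\pi^2/2)^\ell$ — which is exactly what yields the exponent $2\log_2(\pi/\sqrt2)$.
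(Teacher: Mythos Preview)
Your approach is essentially the paper's: the same $\ell$-fold composition of sign-feature maps, the same Hoeffding-plus-union-bound at each layer with carefully graded intermediate dimensions (larger at early rounds, only $N_\ell$ entering the sketch), and recovery by inverting the deterministic distortion. You parametrize by half-squared-distances $u=1-\langle x,y\rangle$ with forward map $g(u)=\tfrac{2}{\pi}\arccos(1-u)$ while the paper tracks inner products via $f(t)=\tfrac{2}{\pi}\arcsin t$ — a pure change of variables — and the paper packages your chain-rule product $\prod_j g'(u_j)$ into a single closed-form bound $g_\ell'(t)\le \tfrac{\pi^\ell}{2^{(\ell+1)/2}}(2-2g_\ell(|t|))^{1-2^{-\ell}}$ (proved separately) rather than your round-by-round dynamical-systems bookkeeping, but the content and the resulting $N=\Theta\bigl(\epsilon^{-2}(\pi^2/2)^\ell\log n\bigr)$ are the same.
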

The proof of \cref{thm:sphere} is in \cref{sec:recovery}. 
We explain how the map $\varphi_\ell$ is constructed in the next subsection and the role of the parameter $\ell$ is discussed in \cref{sec:intuition}. The main advantage of the map $\varphi_\ell$ is that it maps the point set $S$ directly into the discrete cube and thus there is no need to convert the sketch to bits after performing the random mapping. Furthermore, the map $\varphi_\ell$ produces sketches with asymptotically fewer bits than those obtained using the Johnson-Lindenstrauss lemma as long as  $(\log(\frac{1}{m}))^{2\log_2(\pi/\sqrt{2})}=o \bigl( \log(\frac{1}{m^2\epsilon})\bigr)$. This is equivalent to the condition that $(\log(\frac{1}{m}))^{2\log_2(\pi/\sqrt{2})}=o \bigl( \log(\frac{1}{\epsilon}) \bigr)$.

Furthermore, the sketch $\varphi_\ell(S)$ has the desired properties \say{with high probability}. 
The probability that the sketch succeeds is $(1-2/n)^\ell$ and we claim that this quantity approaches one for all useful choices of the parameters: 
First of all, we recall that in all applications of \cref{thm:sphere}, we should take $n>d$. 
We also take $\epsilon = \Omega(d^{-1/2})$ (ignoring logarithmic factors) because otherwise the target dimension $N$ is larger than the original dimension $d$ and then a sketch based on an $\epsilon$-net (as above, but without random projection) for $S$ would be better, by having bitlength $nd$ (up to logarithmic factors).
The assumption that $\epsilon<1-|\langle x,y\rangle | $ for all $x,y$, $x\neq y$ means that $\epsilon< m^2$ and so $1/m<\epsilon^{-1/2}= O(d^{1/4})$. 
This means that $\ell = O(\log_2 \log_2 d)$ and so $(1-2/n)^\ell> (1-2/d)^\ell$ approaches one. 

We remark that it might be possible to replace the assumption in \cref{thm:sphere} that $\epsilon<1-|\langle x,y \rangle|$ for all $x,y\in S$, $x\neq y$ by the weaker assumption that $\epsilon<1-\langle x,y \rangle=\frac{1}{2}\|x-y\|^2$. The reason that an assumption of this sort is necessary is that, because we are trying to produce sketches which allow recovery of squared distances, pairs of points with very small distance are difficult to deal with. As a result, we need to assume for technical reasons that the accuracy parameter $\epsilon$ is smaller then (half) the minimum squared distance, i.e. $\epsilon <1-\langle x,y \rangle$. There is an inherent symmetry in the maps $\varphi_\ell$ that we use which makes it convenient to use the stronger assumption that $\epsilon <1-|\langle x,y \rangle|$.

We are able to extend our main result to deal with not only sets of points contained in the unit sphere, but also any set of points in the unit ball. For a point $x$ in the unit ball  $B^d$ we use $\hat x$ to denote $x/\|x\| $.
\begin{restatable}{theorem}{main}\label{thm:main}
    Let $S\subset B^d$ with $|S|=n\ge 2$ and set $\rho = \min_{x \in S}\|x\|^2$. Let $m= \min_{x,y \in S,x\neq y}\|\hat x-\hat y\|$ and $\ell = \lceil\log_2\log_2 \frac{4}{m} \rceil \ge 1$.
    Let $\epsilon>0$ and assume that  $\epsilon < \min_{\hat x,\hat y \in S,x\neq y}1-|\langle \hat x,\hat y \rangle|$. 
   Then the random map $\varphi_\ell:S^{d-1 } \to \frac{1}{\sqrt{N}}\{-1,1\}^N$ with $N = \Theta\bigl(\frac{\log n}{\epsilon^2} (\log\frac{1}{m})^{2\log_2(\pi/\sqrt{2})} \bigr)$ (defined in the proof of \cref{thm:conditional2} and independent of $S$ except through parameters $n, d, m$ and $\min_{x,y \in S,x\neq y}1-|\langle \hat x,\hat y \rangle|$) satisfies the following with probability at least $(1-\frac{2}{n})^\ell$:
    $\varphi_\ell(\hat S)$ and the norm of each point in $S$ up to an additive $\pm \rho m^2\epsilon/48 $ error is a sketch of $S$ that allows one to recover all squared distances between pairs of points in $S$ up to a multiplicative $(1\pm \epsilon)$ error.
    Moreover the number of bits of the sketch is $\Theta\Bigl(\frac{n \log n}{\epsilon^2} \bigl(\log \frac{1}{m}\bigr)^{2\log_2(\pi/\sqrt{2})} +  n\log \frac{1}{\rho m^2\epsilon} \Bigr)$.
\end{restatable}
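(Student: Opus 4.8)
The plan is to reduce \cref{thm:main} to \cref{thm:sphere} applied to the normalized point set $\hat S=\{\hat x\suchthat x\in S\}\subset S^{d-1}$. The key algebraic identity is
\[
\|x-y\|^2=\|x\|^2+\|y\|^2-2\|x\|\,\|y\|\,\langle\hat x,\hat y\rangle=\bigl(\|x\|-\|y\|\bigr)^2+\|x\|\,\|y\|\,\|\hat x-\hat y\|^2,
\]
which shows that a squared distance in $S$ is completely determined by the norms $\|x\|,\|y\|$ together with the corresponding squared distance $\|\hat x-\hat y\|^2$ in $\hat S$. The sketch consists of $\varphi_\ell(\hat S)$ --- obtained by applying \cref{thm:sphere} to $\hat S$ with accuracy parameter $\epsilon/8$ in place of $\epsilon$, which leaves $N$ of the same order $\Theta\bigl(\frac{\log n}{\epsilon^2}(\log\frac1m)^{2\log_2(\pi/\sqrt{2})}\bigr)$ and preserves the hypothesis because $\epsilon/8<\epsilon<\min_{x\neq y}1-|\langle\hat x,\hat y\rangle|$ (this last condition also forces the $\hat x$, $x\in S$, to be pairwise distinct, so $|\hat S|=n$ and the minimum normalized distance of $\hat S$ equals $m$) --- together with the quantized norm $\widetilde{\|x\|}$ of each point, stored to additive precision $\eta:=\rho m^2\epsilon/48$, which costs $\Theta(\log\frac1{\rho m^2\epsilon})$ bits per point since $\|x\|\in[\sqrt{\rho},1]$.

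From this sketch, for a pair $x\neq y$ in $S$ I recover $\widetilde D_{xy}$, an approximation of $\|\hat x-\hat y\|^2$ accurate to a multiplicative $(1\pm\epsilon/8)$ factor (\cref{thm:sphere}), and then output
\[
\widetilde d_{xy}:=\bigl(\widetilde{\|x\|}-\widetilde{\|y\|}\bigr)^2+\widetilde{\|x\|}\,\widetilde{\|y\|}\,\widetilde D_{xy}.
\]
The first step of the error analysis is the lower bound $\|x-y\|^2\ge\|x\|\,\|y\|\,\|\hat x-\hat y\|^2\ge\rho m^2$, which follows from the identity above together with $\|x\|,\|y\|\ge\sqrt{\rho}$; this is precisely what makes the additive norm error small relative to every distance we must report, since then $\eta\le\epsilon\|x-y\|^2/48$. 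Writing $\widetilde d_{xy}-\|x-y\|^2$ out, the error splits into four contributions: the linear perturbation of $(\|x\|-\|y\|)^2$, its quadratic remainder, the multiplicative error of $\widetilde D_{xy}$ scaled by $\widetilde{\|x\|}\widetilde{\|y\|}$, and the perturbation of the product $\|x\|\|y\|$ scaled by $\|\hat x-\hat y\|^2\le4$. Using $\|x-y\|^2\le4$, the bound on $\eta$, and $\widetilde{\|x\|}\widetilde{\|y\|}\le2\|x\|\|y\|$, each contribution is at most a small explicit constant times $\epsilon\|x-y\|^2$, and they sum to strictly less than $\epsilon\|x-y\|^2$; this is the routine part of the argument and is where the constants $1/8$ and $1/48$ are calibrated.

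Finally, the probability bound is inherited verbatim from \cref{thm:sphere} applied to $\hat S$, since the quantized norms are stored deterministically; thus the recovery succeeds with probability at least $(1-2/n)^\ell$ with the same $\ell=\lceil\log_2\log_2\frac4m\rceil$. The bit count is $nN=\Theta\bigl(\frac{n\log n}{\epsilon^2}(\log\frac1m)^{2\log_2(\pi/\sqrt{2})}\bigr)$ for $\varphi_\ell(\hat S)$ plus $\Theta(n\log\frac1{\rho m^2\epsilon})$ for the quantized norms, matching the claimed total. I expect the only mildly delicate point to be the four-term error bookkeeping --- specifically keeping the norm-quantization error under control where it multiplies the $O(1)$-sized quantity $\|\hat x-\hat y\|^2$, which is exactly what forces the $\rho m^2$ factor in the precision $\eta$; the rest is a direct invocation of \cref{thm:sphere}.
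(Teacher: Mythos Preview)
Your proposal is correct and, once unpacked, coincides with the paper's proof: your estimator $\widetilde d_{xy}=(\widetilde{\|x\|}-\widetilde{\|y\|})^2+\widetilde{\|x\|}\,\widetilde{\|y\|}\,\widetilde D_{xy}$ with $\widetilde D_{xy}=2-2g_\ell(\langle\varphi_\ell(\hat x),\varphi_\ell(\hat y)\rangle)$ is algebraically identical to the paper's $n_x^2+n_y^2-2n_xn_yg_\ell(\langle\varphi_\ell(\hat x),\varphi_\ell(\hat y)\rangle)$, and both arguments rest on the same key inequality $\|x\|\|y\|\|\hat x-\hat y\|^2\le\|x-y\|^2$ (which you use in the form $\|x-y\|^2\ge\rho m^2$).

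The only packaging difference is that you invoke \cref{thm:sphere} as a black box on $\hat S$ and organize the error around the identity $\|x-y\|^2=(\|x\|-\|y\|)^2+\|x\|\|y\|\|\hat x-\hat y\|^2$, whereas the paper goes one level lower, calling \cref{thm:additive} directly to control $g_\ell(\langle\varphi_\ell(\hat x),\varphi_\ell(\hat y)\rangle)-\langle\hat x,\hat y\rangle$ and then assembling the polarization identity $\|x-y\|^2=\|x\|^2+\|y\|^2-2\langle x,y\rangle$ term by term. Your route is slightly more modular and avoids re-deriving the additive-to-multiplicative step for $\|\hat x-\hat y\|^2$ that is already inside the proof of \cref{thm:sphere}; the paper's route keeps all constants explicit in one place. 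Neither gains anything substantive over the other.
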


The proof of \cref{thm:main} is in \cref{sec:recovery}. The number of bits of the sketches in the above theorem depend on the parameter $m$ which is the minimum distance between pairs of points after all points have been normalized to have unit norm. Thus, it is more complicated to compare the number of bits of our sketches to the sketches obtained using the Johnson-Lindenstrauss lemma because the Johnson-Lindenstrauss sketches do not rely on any normalization step. However, as in the case of point sets on $S^{d-1}$, there is a large family of point sets in $B^d$ for which our sketching technique produces sketches with a fewer number of bits.

\subsection{The maps $\varphi_\ell$ and the recovery of $\|x-y\|^2$ by $\varphi_\ell(x)$ and $\varphi_\ell(y)$}
In this section we summarize the construction of the maps $\varphi_\ell$ which are used in \cref{thm:sphere} and formally analyzed in \cref{thm:conditional2} and how they can be used to recover squared distances between points.
Let $f(t) = \frac{2}{\pi}\arcsin(t)$ and $g(t) = \sin(\frac{\pi t}{2})$. 
So $f:[-1,1] \to [-1,1]$ is the inverse of $g:[-1,1]\to [-1,1]$. 
For $\ell\in \mathbb{N}^+$, let $f_\ell$ be the function $f$ composed with itself $\ell$ times and similar for $g_\ell$. 
Notice that for any $\ell \in \mathbb{N}^+$, $f_\ell:[-1,1]\to [-1,1]$ is the inverse of $g_\ell:[-1,1] \to [-1,1]$. 

For simplicity in the rest of the introduction we assume all points are normalized to be on the unit sphere $S^{d-1}$. We define the sign function as 
\begin{equation*}
\sign(t)=
    \begin{cases}
        1 & \text{if } t \ge 0\\
        -1 & \text{if } t < 0.
    \end{cases}
\end{equation*}
The maps $\varphi_\ell$ will be defined as the composition of $\ell$ maps of the following form. Set $D \in \mathbb{N}^+$. Let $Z_i$, $1\le i\le D$ be i.i.d.\ standard Gaussian random vectors in $\Rl^d$. Let $\varphi^D: \Rl^d\to \Rl^D$ be defined by 
\[
\varphi^D(x)_i:=D^{-1/2}\sign(\langle x, Z_i\rangle )
\]
where $\varphi^D(x)_i$ is the $i$th coordinate of $\varphi^D(x)$.\footnote{We remark that the map $\varphi^D$ could be referred to as a \emph{random feature map}, see \cite{DBLP:conf/nips/ChoS09,pmlr-v80-liao18a}.}

The maps $\varphi_\ell$ are now defined as the $\ell$-fold composition of maps of the type $\varphi^D$. 
That is, for some integers $D_1,D_2,\dotsc, D_\ell$, we let $\varphi_1:S^{d-1}\to D_1^{-1/2}\{-1,1\}^{D_1}$ be defined by $\varphi_1(x) = \varphi^D(x)$. 
For $j \in \{1,\dotsc, \ell-1\}$, we let $\varphi_{j+1}(x) = \varphi^{D_{j+1}}(\varphi_j(x))$. Therefore, the map $\varphi_\ell$ maps $S^{d-1}$ to $D_{\ell}^{-1/2}\{-1,1\}^{D_\ell}$. 
To avoid writing the double subscript we write the final dimension of the map as $D_\ell=N$.

It was shown in \cite{pmlr-v80-liao18a} that for $x,y \in S^{d-1}$, $\e \sign(\langle x,Z_1\rangle)\sign(\langle y,Z_1\rangle)  = f(\langle x,y \rangle )$. 
Since $\langle\varphi^D(x),\varphi^D(y)\rangle$ is a sum of $D$ independent copies of $D^{-1}\sign(\langle x,Z\rangle)\sign(\langle y,Z\rangle)$ we get that  $\e \langle\varphi^D(x),\varphi^D(y)\rangle = f(\langle x,y \rangle) $.

Now we can explain how one recovers pairwise distances between points in $S$ from $\varphi_\ell(S)$. 
Let $S$ be a set of $n$ points in $S^{d-1}\subset \Rl^d$. 
As in \cref{thm:sphere}, we map $S$ by $\varphi_\ell:S^{d-1}\to N^{-1/2}\{-1,1\}^{N}$. 
Here $\ell$ is some parameter which is chosen based on $S$ that is explained below and $N$ is chosen based on the desired $\epsilon$ error of the sketch. 
If the remaining integers $D_1,\dots, D_{\ell-1}$ are chosen properly, then we show that $\langle \varphi_\ell(x),\varphi_\ell(y)\rangle $ is a good approximation of $f_\ell(\langle x,y\rangle)$ (\cref{cor}). 
Since $g_\ell $ is the inverse of $f_\ell$ this implies that $g_\ell(\langle \varphi_\ell(x),\varphi_\ell(y)\rangle)$ should be a good approximation of $\langle x,y\rangle$. 
By the polarization identity, this implies that $2-2 g_\ell(\langle \varphi_\ell(x),\varphi_\ell(y)\rangle)$ should be a good approximation of $\|x-y\|^2$. 
So recovering $\|x-y\|^2$ from $\varphi_\ell(S)$ simply involves calculating $2-2 g_\ell(\langle \varphi_\ell(x),\varphi_\ell(y)\rangle)$. 
In the next section we explain why this mapping and recovery scheme leads to good error guarantees. 

\subsection{Intuition behind the construction}\label{sec:intuition}

Now that we have defined the maps $\varphi_\ell$ we can explain the idea behind using maps of this form. The reason why this type of map is useful has to do with the behavior of the derivative of the function $g_\ell$ near $t=1$. As previously mentioned, the  map $\varphi_\ell$ has the property that for all $x,y \in S$,
\begin{align}\label{eq:errr}
|\langle \varphi_\ell(x),\varphi_\ell(y)\rangle -f_\ell(\langle x,y\rangle)|<\delta
\end{align}
for some $\delta$ depending on $N$. Now when we want to recover $\|x-y\|^2$ based on $\varphi_\ell(x)$ and $\varphi_\ell(y)$ we compute $2-2g_\ell(\langle \varphi_\ell(x),\varphi_\ell(y)\rangle)$. The additive error of the approximation of $\|x-y\|^2$ by $2-2g_\ell(\langle \varphi_\ell(x),\varphi_\ell(y)\rangle)$  depends on the error in \cref{eq:errr} as well as the derivative of $g_\ell$ near the point $f_\ell(\langle x,y\rangle)$. The function $g_\ell$ has the property that its derivative approaches zero as $t$ approaches one and so the additive error of the approximation of $\|x-y\|^2$ by $2-2g_\ell(\langle \varphi_\ell(x),\varphi_\ell(y)\rangle)$  gets smaller the closer $f_\ell(\langle x,y\rangle) $ (and thus $\langle x,y\rangle$) is to one, i.e., the closer $\|x-y\|^2$ is to zero. 
(This is quantified in \cref{thm:additive}, specifically the exponent approaching 2 in the additive error)
The effect that this has is that we actually approximate $\|x-y\|^2$ up to a multiplicative error.

The role of the parameter $\ell$ in this construction is in controlling the rate at which $g_\ell'(t)$ approaches zero as $t$ approaches one. If there are pairs $x,y$ such that $\|x-y\|^2$ is very small then we need the derivative to approach zero very quickly. This can be done by increasing the parameter $\ell$; the rate at which $g_\ell'$ approaches zero is faster for larger values of $\ell$ (see \cref{thm:d}). So if the point set $S$ has pairs of points at a very small distance, we map $S$ by $\varphi_\ell$ for a larger value of $\ell$ in order to decrease the approximation error for these very close pairs. It turns out that the correct choice of $\ell$ is $\lceil \log_2\log_2 r\rceil$ where $r$ is approximately the reciprocal of the minimum distance between pairs of points in $S$ (see \cref{thm:sphere}).

We remark that the function $g_\ell(t)$ has the property that its derivative can be as large as $\approx \frac{\pi}{2}^\ell$  when $t$ is near zero (see \cref{lem:deriv3}). The effect that this has is that our algorithm leads to worse approximation of $\|x-y\|^2$ when $\langle x,y \rangle$ is close to zero if $\ell$ is large. However, this loss in accuracy is made up for by increasing $N$ by only a relatively small amount and the gain in accuracy when $\langle x,y\rangle$ is close to one outweighs the loss in accuracy when $\langle x,y\rangle$ is close to zero.

\subsection{Previous variations on random projection}
Our compression method is similar to random projection in that they both involve compressing a set of points by randomly mapping it to a lower dimensional space. A number of other papers have also suggested variations on random projection where a different random mapping is used. In some of these variations mentioned below, the random mapping is still linear. In others, they use a linear mapping followed by some quantization step.  
The main difference in our method is that it is more fundamentally non-linear due to the fact that it is a ``deep'' composition of non-linear maps.

One of the standard versions of random projection involves mapping points by a random Gaussian matrix. It was later shown that other types of random matrices work equally well. In particular, it was shown in \cite{DBLP:conf/pods/Achlioptas01} and \cite{DBLP:journals/ml/ArriagaV06} that the random matrix $R$ in \cref{lem:JL} can be replaced by a matrix whose entries are i.i.d.\ random variables taking value $+1$ with probability $1/2$ and $-1$ with probability $1/2$ while maintaining a similar guarantee on the projected points. This result has been generalized to other types of random matrices, see \cite[Theorem 5.3]{Santosh}.

The above mentioned \say{binary} version of the Johnson-Lindenstrauss lemma due to \cite{DBLP:conf/pods/Achlioptas01} (where the entries of the matrix are all either $+1$ or $-1$) is particularly important for the following reason. As discussed in \cite{MR4418813}, an alternate way to convert sketches obtained by the Johnson-Lindenstrauss lemma to bits is possible if the points have bounded integer coordinates and one uses the binary variant of Johnson-Lindenstrauss lemma. This approach is somewhat incomparable to our setting because of the integrality assumption.

Another variation on the random projection technique is to apply an additional compression step after the random projection. The projected data points are real vectors which in general require an infinite number of bits to store. Therefore, a number of works have proposed a quantization step after the projection which further reduces the cost of storing the data points \cite{DBLP:conf/stoc/Charikar02,DBLP:conf/aistats/Li17,DBLP:conf/nips/LiL19,inproceedings}.

\subsection{Distance compression beyond random mappings}

Random mappings are of course not the only way to compress a data set. Here we compare our method to compression techniques that use methods other than random mappings. These methods tend to be more complicated algorithmically but as we explain below, can produce sketches with fewer bits. 

Given a set of $n$ points in the unit ball in $\Rl^d$, what is the minimum number of bits of a sketch which allows one to recover all pairwise distances up to a multiplicative $(1\pm \epsilon)$ error? As explained above, the Johnson-Lindenstrauss lemma shows that $O\bigl(\epsilon^{-2} n\log n\log(1/m^2\epsilon)\bigr)$ bits suffice. However, this is not the optimal number of bits. It was recently shown in \cite{MR3627774,MR4418813} that if the points are contained in the unit ball and $m$ is the minimum distance between points, then $O\bigl(\epsilon^{-2}n \log n+n \log \log(1/m)\bigr)$ bits suffice. Previous to this result, the best known result was that $O\bigl(\epsilon^{-2}n \log n \log(1/m)\bigr)$ bits suffice \cite{MR1769366}. These two results, however, use sketching techniques that go beyond random mappings. In fact, they use sketches which differ from the sketches obtained by the random projection technique in a fundamental way. Random projection compresses the data set \say{point by point} in the sense that the compression process is applied to each point independently from the others. In contrast, the sketches in \cite{MR4418813} and \cite{MR1769366} must compress the entire data set simultaneously.

Another way of stating this distinction is that ``point by point'' methods (such as random projection) satisfy the requirements of the \emph{one-way communication} version of this sketching problem while the methods used in \cite{MR4418813} and \cite{MR1769366} do not.
In the \emph{one-way communication} version of the sketching problem, Alice holds half of the data points and Bob holds the other half. Alice sends a message to Bob using as few bits as possible. Bob then must report distances between pairs of points where one point in the pair is known by Alice and the other by Bob. The one-way communication version of the sketching problem asks one to determine the minimum number of bits of Alice's message. It was shown in \cite{MR3203011} that if the points are in the unit ball and the minimum distance is $m$, then $\Omega\bigl(\epsilon^{-2}n \log(n/\delta) \log(1/m) \bigr)$ bits are required for the one-way communication version of the problem if the sketch is required to be successful with probability at least $1-\delta$. 

Any sketching algorithm which compresses the data set point by point satisfies the requirements of the one-way communication variant of the sketching problem. We therefore know that sketching algorithms which compress the data set point by point cannot produce sketches with the optimal number of bits. However, there are several advantages to sketching algorithms of this sort. One advantage is that they are generally simpler and easier to implement. Another is that if one wants to add additional points to the data set, the entire sketching algorithm does not need to be re-run, one can simply compress the additional points independently from the rest. 

Our sketching algorithm from \cref{thm:sphere} also has the property that it compresses the data set point by point. Furthermore, the number of bits of our sketch almost matches the lower bound from \cite{MR3203011}. The dominant term in the bound from \cref{thm:sphere} is $\Theta(\epsilon^{-2} n \log n(\log(1/m))^{2\log_2(\pi/\sqrt{2})} )$. Thus our number of bits matches the lower bound from \cite{MR3203011} up to the power on the $\log(1/m)$ term. This motivates the question of whether some variation on our sketching technique can reduce this power. 

\subsection{Outline of the paper}
\cref{sec:construction} contains the construction of the maps $\varphi_\ell$ and quantifies the error in the approximation of $f_\ell(\langle x,y \rangle )$  by  $\langle \varphi_\ell(x),\varphi_\ell(y)\rangle $. Then in \cref{sec:recovery} we prove \cref{thm:sphere,thm:main}. This amounts to showing how $\langle \varphi_\ell(x), \varphi_\ell(y)\rangle $ allows one to estimate $\|x-y\|^2$ and quantifying the error of the estimation. 

\subsection{Notation and preliminaries}
For $x\in \Rl^d \setminus\{ 0 \}$ we use $\hat x $ to denote $x/\|x\|$. For $x,y \in S^{d-1}$, the \emph{polarization identity} states that $2-2\langle x,y \rangle = \|x-y\|^2$; for arbitrary $x,y \in \Rl^d$, it states that $\|x\|^2+\|y\|^2- 2\langle x,y \rangle = \|x-y\|^2$.

\section{The construction of the maps $\varphi_\ell$}\label{sec:construction}

The purpose of this section is to prove the following result, \cref{cor}, which shows a bound on the error of the approximation of $f_\ell(\langle x,y \rangle)$ by $\langle \varphi_\ell(x),\varphi_\ell(y)\rangle$ for all pairs $x,y$ in a set of $n$ points. The error in this approximation depends on the dimension $D_\ell$ of the image space of $\varphi_\ell$. In particular, we show how large $D_\ell$ needs to be in order to guarantee with high probability that  $\langle \varphi_\ell(x),\varphi_\ell(y)\rangle$ is equal to $f_\ell(\langle x,y \rangle)$ up to some additive error $\delta$ for all pairs $x,y$. 

We recall the definitions of the functions $f_\ell$ and $g_\ell$: Let $f(t) = \frac{2}{\pi}\arcsin(t)$ and $g(t) = \sin(\frac{\pi t}{2})$. 
For $\ell\in \mathbb{N}^+$, let $f_\ell$ be the function $f$ composed with itself $\ell$ times and similar for $g_\ell$. 
Notice that for any $\ell \in \mathbb{N}^+$, $f_\ell:[-1,1]\to [-1,1]$ is the inverse of $g_\ell:[-1,1] \to [-1,1]$. 

\begin{corollary}\label{cor}
Let $S \subset S^{d-1}$ with $|S|=n\ge 2$. 
Let $r:= \max_{x,y \in S, x \neq y}\frac{2}{\sqrt{1-|\langle x, y\rangle|}}$. 
Let $\ell \in \mathbb{N}^+$. 
Let $\delta >0$ be such that $\delta < \frac{2}{r^2}$\details{$=\min \frac{1-\abs{\langle x,y\rangle}}{2}$.}. 
Then there exists a random map $\varphi_{\ell}: \Rl^{d} \to D_\ell^{-1/2}\{-1,1\}^{D_\ell}$ (independent of $S$ except through parameters $n, d, r$) such that 
with probability $(1-\frac{1}{n})^\ell$ it satisfies
\[
\bigl| \langle \varphi_{\ell}(x),\varphi_{\ell}(y) \rangle - f_\ell(\langle x,y \rangle ) \bigr| 
< \delta
\]
for all $x,y \in S$, where $D_{\ell} \le \lceil \frac{24\log n}{\delta^2}\rceil$.
\end{corollary}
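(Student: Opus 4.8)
The plan is to realize $\varphi_\ell$ as a composition $\varphi_\ell=\varphi^{D_\ell}\circ\cdots\circ\varphi^{D_1}$ of $\ell$ independent sign-feature maps, choosing the early dimensions $D_1,\dots,D_{\ell-1}$ (which the statement does not constrain) large enough that the fresh error introduced at layer $j$ shrinks geometrically as $j$ runs down from $\ell$ to $1$; this way the error amplification caused by composing $f$ with itself can be absorbed while only the last dimension $D_\ell$ stays small. Two facts about one layer will be used: each $\varphi^D$ maps into the unit vectors $D^{-1/2}\{-1,1\}^D$, and for unit vectors $a,b$ we have $\e\langle\varphi^D(a),\varphi^D(b)\rangle=f(\langle a,b\rangle)$ (this holds in every dimension; see \cite{pmlr-v80-liao18a}), where, conditionally on $a,b$, the quantity $\langle\varphi^D(a),\varphi^D(b)\rangle$ is an average of $D$ i.i.d.\ $\{-1/D,1/D\}$-valued random variables. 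Since $n\ge2$ we have $r\ge2$, so on $[-(1-2/r^2),\,1-2/r^2]$ one has $1-\xi^2\ge1-(1-2/r^2)^2\ge3/r^2$ and hence $L:=\tfrac{2r}{\pi\sqrt3}$ is an upper bound there for $|f'|$, with $2L>1$. For $j=0,1,\dots,\ell$ put $\beta_j:=\delta/(2L)^{\ell-j}$, and for $j\ge1$ put $\delta_j:=\beta_j/2$ and $D_j:=\lceil 6\log n/\delta_j^2\rceil$; then $D_\ell=\lceil 24\log n/\delta^2\rceil$ and, using $\delta<2/r^2$, we have $\beta_0<\cdots<\beta_\ell=\delta<2/r^2$.

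For the probabilistic part, fix a pair $x,y\in S$ and write $u_j:=\langle\varphi_j(x),\varphi_j(y)\rangle$, where $\varphi_j$ is the composition of the first $j$ layers and $u_0:=\langle x,y\rangle$. Conditioned on the first $j-1$ layers, $u_j$ is an average of $D_j$ i.i.d.\ $\pm1/D_j$ terms with mean $f(u_{j-1})$, so Hoeffding's inequality gives $\prob(|u_j-f(u_{j-1})|\ge\delta_j)\le 2e^{-\delta_j^2D_j/2}$; since $\delta_j^2D_j/2\ge3\log n$, a union bound over the fewer than $n^2/2$ pairs shows that
\[
G_j:=\{\,|u_j-f(u_{j-1})|<\delta_j\text{ for all }x,y\in S\,\}
\]
has conditional probability at least $1-1/n$ given the first $j-1$ layers. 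Iterating this (tower property) gives $\prob(G_1\cap\cdots\cap G_\ell)\ge(1-1/n)^\ell$.

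It remains to show deterministically that $G_1\cap\cdots\cap G_\ell$ forces $|u_\ell-f_\ell(\langle x,y\rangle)|<\delta$ for every pair. Put $s_j:=f_j(\langle x,y\rangle)$ (so $s_0=u_0$) and $E_j:=\max_{x,y}|u_j-s_j|$, so $E_0=0$. The elementary inequality $|f(t)|\le|t|$ on $[-1,1]$ — equivalently $\arcsin t\le\pi t/2$ on $[0,1]$, which follows from the endpoint values and the sign of the derivative — gives $|s_j|\le|\langle x,y\rangle|$, hence $1-|s_j|\ge 1-|\langle x,y\rangle|\ge 4/r^2$ by the definition of $r$. One then proves $E_j<\beta_j$ by induction on $j$: given $E_{j-1}<\beta_{j-1}$ (so $E_{j-1}<2/r^2$), the points $s_{j-1}$ and $u_{j-1}$, hence the segment joining them, lie in $(-(1-2/r^2),\,1-2/r^2)$ since $|s_{j-1}|\le 1-4/r^2$ and $|u_{j-1}|\le|s_{j-1}|+E_{j-1}<1-2/r^2$; the mean value theorem then gives $|f(u_{j-1})-f(s_{j-1})|\le L\,E_{j-1}$, and therefore on $G_j$,
\[
E_j\le\max_{x,y}\bigl(|u_j-f(u_{j-1})|+|f(u_{j-1})-f(s_{j-1})|\bigr)<\delta_j+L\,E_{j-1}<\tfrac{\beta_j}{2}+L\beta_{j-1}=\beta_j .
\]
Taking $j=\ell$ gives $E_\ell<\delta$, which is the claim.

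The step I expect to be the main obstacle is this last one, the error propagation: $|f'|$ is unbounded near $\pm1$, so a priori composing $\ell$ noisy copies of $f$ could amplify the error without control. What rescues it is the combination of (i) the contraction $|f(t)|\le|t|$, which keeps every ideal iterate $s_j$ — and, on the good event, every actual iterate $u_j$ — at distance at least $4/r^2$ from $\pm1$, capping the effective Lipschitz constant at $L$, and (ii) the backward geometric decay of the budgets $\beta_j$, which is affordable precisely because the statement constrains only $D_\ell$ and leaves $D_1,\dots,D_{\ell-1}$ free to be as large as $(2L)^{2(\ell-j)}D_\ell$.
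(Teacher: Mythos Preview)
Your proof is correct. The overall architecture---compose $\ell$ independent sign--feature layers, apply Hoeffding at each layer with a union bound over pairs, and control the propagation of error by a Lipschitz estimate for $f$---is the same as in the paper, which obtains the corollary as the $j=\ell$ case of \cref{thm:conditional2}. The one substantive difference is in the Lipschitz control.

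The paper uses the sharper estimate $1-|f_j(t)|\ge(1-|t|)^{(2/3)^j}$ (\cref{lem:lower}), so that at depth $j$ the effective Lipschitz constant of $f$ is only $r^{(2/3)^j}$; this yields the explicit intermediate dimensions $D_j\le\bigl\lceil 24\cdot 2^{2(\ell-j)}r^{6((2/3)^j-(2/3)^\ell)}\log n/\delta^2\bigr\rceil$, where the power of $r$ stays bounded independently of $\ell$. You instead use only the contraction $|f(t)|\le|t|$ (the content of \cref{lem:upper}), which keeps every ideal and actual iterate at distance $\ge 2/r^2$ from $\pm1$ and gives a uniform Lipschitz bound $L=\tfrac{2r}{\pi\sqrt{3}}$. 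The price is that your intermediate $D_j$ grow like $(2L)^{2(\ell-j)}D_\ell$, i.e.\ with a power of $r$ that is linear in $\ell-j$ rather than bounded. Since the corollary constrains only $D_\ell$, this costs nothing here, and your argument is more elementary: it avoids the $\arccos$ inequality underlying \cref{lem:lower}. The paper's refinement is needed only for the stronger \cref{thm:conditional2}, which does track the intermediate $D_j$.

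One cosmetic point: you should note separately that the claimed inequality is trivial for $x=y$ (both sides equal $1$), since your Lipschitz argument uses $1-|\langle x,y\rangle|\ge 4/r^2$, which fails there.
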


The above corollary follows immediately from the following theorem which also determines a bound on the error of the approximation accuracy not only of $f_\ell(\langle x,y \rangle)$ by  $\langle \varphi_\ell(x),\varphi_\ell(y)\rangle $, but also the approximation accuracy of $f_j(\langle x,y \rangle)$ by  $\langle \varphi_j(x),\varphi_j(y)\rangle $ for all $j \in \{1,\dotsc, \ell\}$.

We will make use of the fact that for all $j$, $f_j$ and $g_j$ are both increasing on $[-1,1]$ and that $f_j(0) = g_j(0)=0$, $f_j(1) = g_j(1)=1$ and  $f_j(-1) = g_j(-1)=-1$. 
We will also make use of the fact that for any $j \in \mathbb{N}^+$, $f_j$ and $g_j$ are odd functions.

\begin{theorem}\label{thm:conditional2}
Let $S \subset S^{d-1}$ with $|S|=n\ge 2$. 
Let $r:= \max_{x,y \in S, x \neq y}\frac{2}{\sqrt{1-|\langle x, y\rangle|}}$. 
Let $\ell \in \mathbb{N}^+$. 
Let $\delta >0$ be such that $\delta < \frac{2}{r^2}$\details{$=\min \frac{1-\abs{\langle x,y\rangle}}{2}$.}. 
Then there exist random maps $\varphi_{j}: \Rl^{d} \to D_j^{-1/2}\{-1,1\}^{D_j}$, $j \in \{1, \dotsc, \ell\}$ (independent of $S$ except through parameters $n, d, r$) such that 
for all $j \in \{1, \dotsc, \ell\}$, with probability at least $(1-\frac{2}{n})^j$, $\varphi_j$ satisfies
\begin{align}\label{eq:je}
\bigl| \langle \varphi_{j}(x),\varphi_{j}(y) \rangle - f_j(\langle x,y \rangle ) \bigr| 
< \frac{\delta}{2^{\ell-j}r^{3((2/3)^{j}-(2/3)^\ell)}}
\end{align}
for all $x,y \in S$, where $D_{j} \le \Bigl\lceil \frac{24\cdot2^{2(\ell-j)}r^{6((2/3)^{j}-(2/3)^\ell)}\log n}{\delta^2} \Bigr\rceil$.
\end{theorem}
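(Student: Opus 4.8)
The plan is to build the $\varphi_j$ as iterated compositions of independent copies of the random feature map $\varphi^D$ introduced above, and to run an induction on $j$ that controls the error in \eqref{eq:je} and the failure probability together. For positive integers $D_1,\dots,D_\ell$ to be specified, set $\varphi_1=\varphi^{D_1}$ and $\varphi_{j+1}=\varphi^{D_{j+1}}\circ\varphi_j$, where layer $i$ uses its own independent family of standard Gaussian vectors. Since every coordinate of $\varphi^D(\cdot)$ equals $\pm D^{-1/2}$, each $\varphi^D$, and hence each $\varphi_j$, has image in the unit sphere, so the composition is well defined and the maps depend on $S$ only through $D_1,\dots,D_\ell$, which depend only on $n$, $r$, $\delta$ (and on $d$ through the domain $\Rl^d$ of $\varphi_1$). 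Fix $x,y\in S$, abbreviate $a_j:=f_j(\langle x,y\rangle)$ and $\hat a_j:=\langle\varphi_j(x),\varphi_j(y)\rangle$, and let $\epsilon_j:=\delta/\bigl(2^{\ell-j}r^{3((2/3)^j-(2/3)^\ell)}\bigr)$ be the right-hand side of \eqref{eq:je}; one checks directly that $\epsilon_\ell=\delta$, that $\epsilon_1<\dots<\epsilon_\ell$, and that $\epsilon_{j+1}=2\,r^{(2/3)^j}\epsilon_j$. We take $D_j:=\bigl\lceil 24\log n/\epsilon_j^2\bigr\rceil$, which is exactly the bound claimed. The induction establishes: for each $j$ there is an event $\mathcal E_j=A_1\cap\dots\cap A_j$ with $\prob(\mathcal E_j)\ge(1-2/n)^j$ on which $|\hat a_i-a_i|<\epsilon_i$ for all $i\le j$ and all $x,y\in S$ simultaneously; here $A_i$ is the event that the $i$-th layer concentrates, $|\hat a_i-f(\hat a_{i-1})|<t_i$ for all pairs, with the conventions $\hat a_0:=\langle x,y\rangle$ (so $f(\hat a_0)=a_1$), $t_1:=\epsilon_1$, and $t_i:=\epsilon_i/2$ for $i\ge2$.

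Two ingredients drive the induction. First, conditionally on $\varphi_{i-1}$, the number $\hat a_i=\langle\varphi^{D_i}(\varphi_{i-1}(x)),\varphi^{D_i}(\varphi_{i-1}(y))\rangle$ is an average of $D_i$ i.i.d.\ $\{-1,1\}$-valued random variables with mean $f(\hat a_{i-1})$, by the identity $\e[\sign\langle u,Z\rangle\,\sign\langle v,Z\rangle]=f(\langle u,v\rangle)$ for unit vectors recalled above. Hoeffding's inequality and a union bound over the at most $n^2/2$ pairs give $\prob(A_i\mid\varphi_{i-1})\ge1-n^2e^{-D_it_i^2/2}\ge1-2/n$, since $D_it_i^2/2\ge3\log n$; as this conditional bound is uniform over realizations of $\varphi_{i-1}$, the tower property yields $\prob(\mathcal E_j)\ge(1-2/n)^j$. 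Second, on $\mathcal E_{i-1}$ we have $|\hat a_{i-1}-a_{i-1}|<\epsilon_{i-1}$, so by the triangle inequality and the mean value theorem, on $\mathcal E_i$,
\[
|\hat a_i-a_i|\;\le\;|\hat a_i-f(\hat a_{i-1})|+|f(\hat a_{i-1})-f(a_{i-1})|\;<\;t_i+L_{i-1}\epsilon_{i-1},
\]
where $L_j:=\sup\{|f'(t)|:|t|\le|a_j|+\epsilon_j\}$. Granting the estimate $L_j\le r^{(2/3)^j}$ for $1\le j\le\ell-1$, the right-hand side is at most $\epsilon_i/2+r^{(2/3)^{i-1}}\epsilon_{i-1}=\epsilon_i$, which closes the induction; the case $j=\ell$ is the theorem.

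It remains to prove $L_j\le r^{(2/3)^j}$, and this is where the hypothesis $\delta<2/r^2$ is used. Since $f_j$ is odd, increasing, and fixes $0$ and $1$, and $|\langle x,y\rangle|\le1-4/r^2$ by the definition of $r$, we have $|a_j|\le f_j(1-4/r^2)=:1-u_j$. The elementary inequality $\tfrac2\pi\arcsin(1-s)\le1-\tfrac{2\sqrt2}{\pi}\sqrt s$ (a consequence of $\cos\theta\ge1-\theta^2/2$) yields $u_1\ge2c/r$ and the recursion $u_{j+1}\ge c\sqrt{u_j}$, with $c:=\tfrac{2\sqrt2}{\pi}$ (here $2/c=\pi/\sqrt2$, the quantity appearing in the exponents of \cref{thm:sphere,thm:main}). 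Chaining the recursion gives $u_j\ge c^2\bigl(2/(rc)\bigr)^{2^{1-j}}$. Because $r\ge2$ always (as $1-|\langle x,y\rangle|\le1$), one checks that $u_j\ge2/r^2$, while $\epsilon_j\le\epsilon_{\ell-1}<\delta/2<1/r^2$; hence $\epsilon_j<u_j/2$, so $|a_j|+\epsilon_j\le1-u_j/2<1$ and therefore $L_j\le f'(1-u_j/2)\le\tfrac2\pi(u_j/2)^{-1/2}=c\,u_j^{-1/2}\le(rc/2)^{2^{-j}}$. Finally $(rc/2)^{2^{-j}}\le r^{(2/3)^j}$: taking logarithms and multiplying through by $2^j$, this reduces to $\log(rc/2)\le(4/3)^j\log r$, i.e.\ $\log(c/2)\le\bigl((4/3)^j-1\bigr)\log r$, which holds because $\log(c/2)<0\le\bigl((4/3)^j-1\bigr)\log r$.

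The main obstacle is this estimate $L_j\le r^{(2/3)^j}$. Proving it requires two points to be checked with care: that the accumulated error $\epsilon_j$ never brings $\hat a_j$ (or $a_j$) close enough to $\pm1$ for $f'$ to blow up---this is precisely why $\delta$ is taken below $2/r^2$ and why $\epsilon_j$ is forced to decay geometrically in $\ell-j$---and the rate at which the iterates of $f$ approach $\pm1$, captured by $u_{j+1}\gtrsim\sqrt{u_j}$; the exponent $(2/3)^j$ in the statement is deliberately not tight, which leaves enough slack to absorb all the constants.
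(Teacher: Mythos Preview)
Your proof is correct and follows the paper's overall strategy: both build $\varphi_j$ as iterated independent copies of $\varphi^D$, run the same induction on $j$, and at each step split the error via the triangle inequality into a Hoeffding concentration term for $|\hat a_i-f(\hat a_{i-1})|$ and a propagation term $|f(\hat a_{i-1})-f(a_{i-1})|$ controlled by bounding $f'$ near $a_{i-1}$. The choice of $D_j$, the probability accounting, and the recursion $\epsilon_{j+1}=2r^{(2/3)^j}\epsilon_j$ all coincide with the paper's.

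The one substantive difference is in how the key bound $L_j\le r^{(2/3)^j}$ is obtained. The paper invokes \cref{lem:lower}, namely $1-|f_j(t)|\ge(1-|t|)^{(2/3)^j}$, whose base case $1-f(t)\ge(1-t)^{2/3}$ is deduced from an $\arccos$ inequality quoted from an external reference. You instead use the elementary recursion $u_{j+1}\ge c\sqrt{u_j}$ with $c=2\sqrt2/\pi$ (equivalent to $1-\cos\theta\le\theta^2/2$), chain it to $u_j\ge c^2(2/(rc))^{2^{1-j}}$, and obtain $L_j\le(rc/2)^{2^{-j}}$, which you then relax to $r^{(2/3)^j}$. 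Your argument is more self-contained and in fact yields a sharper intermediate exponent ($2^{-j}$ in place of $(2/3)^j$), though you discard the improvement to match the statement; the paper's route has the advantage of making the exponent $(2/3)^j$ in the theorem appear directly rather than after a comparison step.
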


\begin{proof}
The maps $\varphi_j$ will be compositions of maps of the following form. Set $D \in \mathbb{N}^+$. Let $Z_i$, $1\le i\le D$ be i.i.d.\ standard Gaussian random vectors in $\Rl^d$. Let $\varphi^D: \Rl^d\to \Rl^D$ be defined by 
\[
\varphi^D(x)_i:=D^{-1/2}\sign(\langle x, Z_i\rangle ),
\]
where $\varphi^D(x)_i$ is the $i$th coordinate of $\varphi^D(x)$. A direct calculation, see \cite{pmlr-v80-liao18a}, shows that $\e \langle \varphi^D(x),\varphi^D(y)\rangle = f(\langle x,y \rangle)$. Furthermore, $\|\varphi^D(x)\|=1$ for all $x \in \Rl^d$.

First we use this construction to define $\varphi_1$. We let $\varphi_1 = \varphi^{D_1}$ where $D_1$ is chosen below. Using Hoeffding's inequality, for all $x,y \in S$,
\begin{align*}
   & \prob\left(\Big| \langle \varphi(x) , \varphi(y)\rangle - f(\langle x,y \rangle)\Big|  >\frac{\delta}{2^{\ell-1}r^{3((2/3)-(2/3)^\ell)}}\right) \\
    &= \prob\left(\Big| D_1\langle \varphi(x) , \varphi(y)\rangle - D_1f(\langle x,y \rangle)\Big|  > \frac{D_1\delta}{2^{\ell-1}r^{3((2/3)-(2/3)^\ell)}}\right) \\
    & \le 2 \exp\left(-\frac{D_1\delta^2}{2\cdot2^{2(\ell-1)}r^{6((2/3)-(2/3)^\ell)}}\right).
\end{align*}
We set $D_1 = \Bigl\lceil \frac{6\cdot2^{2(\ell-1)}r^{6((2/3)-(2/3)^\ell)}\log n}{\delta^2}\Bigr\rceil$. This means that the above probability is less than $2/n^3$ and that $\varphi_1$ satisfies the conditions of the theorem with probability at least $1-\binom{n}{2}\frac{2}{n^3}\ge 1-1/n\ge 1-2/n$. 

Now assume that the required map exists for some $j \ge 1$. We  will show that it exists for $j+1$. So there exists a map $\varphi_{j}: \Rl^{d} \to D_j^{-1/2}\{-1,1\}^{D_j}$ which satisfies \cref{eq:je} with probability at least $(1-\frac{2}{n})^j$. Let $\varphi_{j+1}$ be defined by $\varphi_{j+1}(x) = \varphi^{D_{j+1}}\bigl(\varphi_j(x)\bigr)$ where $D_{j+1}$ will be chosen at the end of the proof. 
We will show that, conditioned on the event that $\varphi_j$ does satisfy \cref{eq:je}, the probability that $\varphi_{j+1}$ satisfies \cref{eq:je} is at least $1-\frac{2}{n}$. So assume that $\varphi_j$ does satisfy \cref{eq:je}. Recall that we want to show that $\langle \varphi_{j+1}(x), \varphi_{j+1}(y)\rangle$ is a good estimate of $f_{j+1}(\langle x,y \rangle)$. We have by \cite{pmlr-v80-liao18a} that 
\[
\e \langle \varphi_{j+1}(x), \varphi_{j+1}(y)\rangle = f(\langle \varphi_j(x),\varphi_j(y) \rangle),
\]
i.e., $\langle \varphi_{j+1}(x), \varphi_{j+1}(y)\rangle$ is an unbiased estimator of $f(\langle \varphi_j(x),\varphi_j(y) \rangle)$. 
By the triangle inequality,
\begin{align}\label{eq:tri}
&\bigl|\langle \varphi_{j+1}(x), \varphi_{j+1}(y)\rangle - f_{j+1}(\langle x,y \rangle) \bigr| \notag\\ 
&\le \bigl|\langle \varphi_{j+1}(x), \varphi_{j+1}(y)\rangle -f(\langle \varphi_j(x),\varphi_j(y) \rangle )\bigr|  + \bigl|f(\langle \varphi_j(x),\varphi_j(y) \rangle )  - f(f_j(\langle x,y \rangle))\bigr|.
\end{align}
First we give a bound on the second term in \cref{eq:tri}. We are assuming that $\varphi_j$ satisfies \cref{eq:je}, i.e., that for all $x,y\in  S$,
\[
 \bigl|\langle \varphi_j(x),\varphi_j(y) \rangle  - f_j(\langle x,y \rangle)\bigr| \le \frac{\delta}{2^{\ell-j}r^{3((2/3)^{j}-(2/3)^\ell)}}.
\]
So to get a bound on the second term in \cref{eq:tri} we need to get an upper bound on the derivative of $f$ in the interval between $\langle \varphi_j(x),\varphi_j(y) \rangle$ and $f_j(\langle x,y \rangle)$. We claim for all pairs $x,y\in S$, $x\neq y$, the derivative of $f$ in the interval between $\langle \varphi_j(x),\varphi_j(y) \rangle$ and $f_j(\langle x,y \rangle)$ is upper bounded by $r^{(2/3)^j}$. 

Let $t$ be in the closed interval between $\langle \varphi_j(x),\varphi_j(y) \rangle$ and $f_j(\langle x,y \rangle)$.
By definition of $\varphi_j$ and $f_j$ this implies $\abs{t} \leq 1$.
We also have $\abs{t} \leq \abs{f_j(\langle x,y \rangle)} + \delta/2^{\ell-j}r^{3((2/3)^{j}-(2/3)^\ell)} \le \abs{f_j(\langle x,y \rangle)} + \delta$ by \cref{eq:je} and the fact that $ r \geq 2$ by definition. 
Thus, 
\begin{align*}
    1-|t| &\ge  1-|f_j(\langle x,y \rangle)|-\delta \\
    & \ge (1-|\langle x,y\rangle|)^{(2/3)^j}-\delta && \text{by \cref{lem:lower}} \\
    & \ge \frac{(1-|\langle x,y\rangle|)^{(2/3)^j}}{2} && \text{by } \delta<\frac{2}{r^2}\le \frac{(1-|\langle x,y \rangle|)}{2}\le \frac{(1-|\langle x,y \rangle|)^{(2/3)^j}}{2}.
\end{align*}
Using this and $\abs{t}\leq 1$, we get
\begin{align*}
f'(t) &= \frac{2}{\pi \sqrt{1-t^2}}\\
&\leq \frac{2}{\pi \sqrt{1-\abs{t}}}\\
&\leq \frac{2\sqrt{2}}{\pi \bigl(\sqrt{1-|\langle x,y \rangle|}\bigr)^{(2/3)^j}}\\
&\le\frac{2\sqrt{2}r^{(2/3)^j}}{\pi} <r^{(2/3)^j}.
\end{align*}

We have shown that the derivative of $f$ in the interval between $\langle \varphi_j(x),\varphi_j(y) \rangle$ and $f_j(\langle x,y \rangle)$ is upper bounded by $r^{(2/3)^j}$. 
This combined with \cref{eq:je} and the fact that the derivative of $f$ is positive implies that
\begin{align*}
    \Big|f(\langle \varphi_j(x),\varphi_j(y) \rangle )  - f(f_j(\langle x,y \rangle))\Big|& \le r^{(2/3)^j}\frac{\delta}{2^{\ell-j}r^{3((2/3)^j-(2/3)^\ell)}}\\
    &=\frac{\delta}{2\cdot2^{\ell-(j+1)}r^{3((2/3)^{j+1}-(2/3)^\ell)}},
\end{align*}
where the equality above uses that $3 \bigl((2/3)^j-(2/3)^\ell \bigr)= \sum_{i=j}^{\ell-1}(2/3)^j$.

Now we deal with the first term in \cref{eq:tri}. Using Hoeffding's inequality,
\begin{align*}
    &\prob\Bigl(\bigl| \langle \varphi_{j+1}(x) , \varphi_{j+1}(y)\rangle - f_{j+1}(\langle x,y \rangle)\bigr|  >\frac{\delta}{2\cdot2^{\ell-(j+1)}r^{3((2/3)^{j+1}-(2/3)^\ell)}}\Bigr) \\
    &=\prob\Bigl(\bigl| D_{j+1}\langle \varphi_{j+1}(x) , \varphi_{j+1}(y)\rangle - D_{j+1}f_{j+1}(\langle x,y \rangle)\bigr|  > \frac{D_{j+1}\delta}{2\cdot2^{\ell-(j+1)}r^{3((2/3)^{j+1}-(2/3)^\ell)}}\Bigr)\\
    & \le 2\exp\Bigl(-\frac{\delta^2D_{j+1}}{8\cdot 2^{2(\ell-(j+1))}r^{6((2/3)^{j+1}-(2/3)^\ell)}}\Bigr).
\end{align*}
We set $D_{j+1} = \lceil \frac{24\cdot2^{2(\ell-(j+1))}r^{6((2/3)^{j+1}-(2/3)^\ell)}\log n}{\delta^2}\rceil$. This means that the above probability is less than $2/n^3$. So, using \cref{eq:tri} and the previously established bound on the second term in \cref{eq:tri}, we have shown that for any pair $x,y \in S$,
\[
\Big| \langle \varphi_{j+1}(x) , \varphi_{j+1}(y)\rangle - f_{j+1}(\langle x,y \rangle)\Big|  <\frac{\delta}{2^{\ell-(j+1)}r^{3((2/3)^{j+1}-(2/3)^\ell)}}
\]
with probability at least $1-\frac{2}{n^3}$. 
So, conditioned on the event that $\varphi_j$ satisfies \cref{eq:je}, $\varphi_{j+1}$ satisfies \cref{eq:je} with probability at least $1-\frac{2}{n}$. 
Since the probability that $\varphi_{j+1}$ satisfies \cref{eq:je} is greater than or equal to the probability that both $\varphi_{j+1}$ and $\varphi_j$ satisfy \cref{eq:je}, this means that the probability that $\varphi_{j+1}$ satisfies \cref{eq:je} is at least $(1-\frac{2}{n})^{j+1}$.
\end{proof}

\section{The recovery of $\|x-y\|^2$ by $\varphi_\ell(x)$ and $\varphi_\ell(y)$}\label{sec:recovery}

Here we complete the proofs of the main theorems, \cref{thm:sphere,thm:main}. This is done in two steps. Recall that we showed in \cref{cor} that $\langle \varphi_\ell(x),\varphi_\ell(y)\rangle $ is a good approximation of $f_\ell(\langle x,y \rangle)$ for all pairs $x,y \in S$. The first step is \cref{thm:additive} which shows that $g_\ell(\langle \varphi_\ell(x),\varphi_\ell(y)\rangle )$ is a good approximation of $\langle x,y \rangle$. The reason this is true is because $g_\ell$ is the inverse of $f_\ell$. So the proof of \cref{thm:additive} uses facts about the derivative of $g_\ell$ (in particular \cref{thm:d}) to show that the bound on the error of the approximation of $f_\ell(\langle x,y\rangle)$ by $\langle \varphi_\ell(x),\varphi_\ell(y)\rangle$ implies a bound on the error of the approximation of $\langle x,y \rangle $ by $g_\ell(\langle \varphi_\ell(x),\varphi_\ell(y)\rangle )$ (\cref{thm:additive}). 

The second step is to set $\ell = \lceil\log_2 \log_2 \frac{4}{m} \rceil$ where $m$ is the minimum distance between pairs of distinct points in $S$ and then show using the polarization identity that the error bound established in \cref{thm:additive} implies the error bounds in \cref{thm:sphere,thm:main}.

\begin{theorem}\label{thm:additive}
    Let $S \subset S^{d-1}$ with $|S|=n$. Let $\ell \in \mathbb{N}^+$ and $\epsilon >0$ and assume that $\epsilon$ satisfies $\epsilon<1-|\langle x,y \rangle|$ for all $x,y \in S$ with $x\neq y$. 
    Then the random map $\varphi_\ell : S^{d-1} \to N^{-1/2}\{-1,1\}^N$ from \cref{thm:conditional2} satisfies that, with probability at least $(1-2/n)^\ell$, for all $x,y \in S$, $g_\ell(\langle \varphi_\ell(x),\varphi_\ell(y)\rangle)$ is equal to $\langle x,y \rangle$ up to an additive $\pm \epsilon\|x-y\|^{2-2^{-\ell+1}}$ error where $N= \left\lceil \frac{48(\pi/\sqrt{2})^{2\ell}\log n}{\epsilon^2}\right\rceil$. Equivalently, for all $x,y \in S$, $2-2g_\ell(\langle \varphi_\ell(x),\varphi_\ell(y)\rangle)$ is equal to $\|x-y\|^2$ up to an additive $\pm \epsilon\|x-y\|^{2-2^{-\ell+1}}$ error. \details{Note that $2-2|\langle x,y \rangle| \leq 2-2\langle x,y \rangle = \norms{x-y}$.}
\end{theorem}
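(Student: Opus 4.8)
The plan is to deduce \cref{thm:additive} from \cref{cor} together with the estimates on the derivative of $g_\ell$ recorded earlier (\cref{thm:d} for the decay of $g_\ell'$ near $\pm1$ and \cref{lem:deriv3} for the global bound $g_\ell'\le(\pi/2)^\ell$) and the estimates on how fast $f_\ell$ approaches $\pm1$ (\cref{lem:lower}). First I would apply \cref{cor} with the choice $\delta:=\epsilon/\bigl(\sqrt2\,(\pi/\sqrt2)^{\ell}\bigr)$. The hypothesis $\epsilon<1-|\langle x,y\rangle|$ for all distinct $x,y\in S$ is precisely $\epsilon<2/r^2$, and since $\sqrt2\,(\pi/\sqrt2)^{\ell}\ge\pi>2$ for all $\ell\ge1$ this gives $\delta<\epsilon/2<2/r^2$, so \cref{cor} applies; a direct computation gives $\lceil 24\log n/\delta^2\rceil=\lceil 48(\pi/\sqrt2)^{2\ell}\log n/\epsilon^2\rceil=N$, so the map $\varphi_\ell$ produced by \cref{cor} is exactly the one in the statement. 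It satisfies, with probability at least $(1-1/n)^\ell\ge(1-2/n)^\ell$ and simultaneously for all $x,y\in S$, the bound $\bigl|\langle\varphi_\ell(x),\varphi_\ell(y)\rangle-f_\ell(\langle x,y\rangle)\bigr|<\delta$. I condition on this event.

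Fix distinct $x,y\in S$ and write $s=\langle x,y\rangle$, $u=\langle\varphi_\ell(x),\varphi_\ell(y)\rangle$, $v=f_\ell(s)$; then $g_\ell(v)=s$ (since $g_\ell$ inverts $f_\ell$) and $|u|,|v|\le1$ (by Cauchy--Schwarz and the unit-norm image of $\varphi_\ell$). By the mean value theorem,
\[
\bigl|g_\ell(u)-s\bigr|=\bigl|g_\ell(u)-g_\ell(v)\bigr|\le|u-v|\cdot\max_{t}g_\ell'(t)<\delta\cdot\max_{t}g_\ell'(t),
\]
the maximum over $t$ in the closed interval with endpoints $u$ and $v$. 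So it suffices to show $g_\ell'(t)\le\sqrt2\,(\pi/\sqrt2)^{\ell}\,\|x-y\|^{2-2^{-\ell+1}}$ on that interval: multiplying by $\delta$ yields the claimed additive error, and the ``equivalently'' clause then follows from the polarization identity $\|x-y\|^2=2-2s$ applied to $s$ and to $g_\ell(u)$. When $\|x-y\|\ge1$ — in particular whenever $s\le0$ — the target right-hand side is at least $\sqrt2\,(\pi/\sqrt2)^\ell\ge(\pi/2)^\ell$, so the inequality already follows from the global bound $g_\ell'\le(\pi/2)^\ell$ of \cref{lem:deriv3}; hence the only substantial case is $s$ close to $1$.

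The derivative estimate in that case is the crux and the step I expect to be the main obstacle. I would use the product form of $g_\ell'$ underlying \cref{thm:d}: writing $g_k$ for the $k$-fold composite of $g$, one has $g_\ell'(t)=(\pi/2)^\ell\prod_{k=1}^{\ell}\sqrt{1-g_k(t)^2}$, with $g_k(v)=f_{\ell-k}(s)$. The role of the particular small value of $\delta$ is that its extra factor $(\pi/\sqrt2)^{-\ell}$ dominates, with exponential room to spare, the global Lipschitz constants $(\pi/2)^k\le(\pi/2)^\ell$ of the iterates $g_k$; consequently each $g_k(t)$ for $t$ in the interval differs from $f_{\ell-k}(s)$ by at most $\delta(\pi/2)^k$, which is small compared with $1-|f_{\ell-k}(s)|\ge1-|s|>\epsilon$ (here is where $\epsilon<1-|\langle x,y\rangle|$ and \cref{lem:lower} come in), so these $\ell$ perturbations multiply to only a bounded overall factor. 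Thus $g_\ell'(t)$ is, up to a bounded factor, at most $g_\ell'(v)=(\pi/2)^\ell\prod_{k=1}^{\ell}\sqrt{1-f_{\ell-k}(s)^2}$; using $1-f_j(s)^2\le2(1-|f_j(s)|)$ pulls $2^{\ell/2}$ into the prefactor, turning $(\pi/2)^\ell$ into $(\pi/\sqrt2)^\ell$, and the upper estimate $1-|f_j(s)|\le(1-|s|)^{2^{-j}}$ bounds the remaining product of $\ell$ square roots by $(1-|s|)^{1-2^{-\ell}}$. Since $\|x-y\|^2=2(1-|s|)$ when $s\ge0$, we have $\|x-y\|^{2-2^{-\ell+1}}=\bigl(2(1-|s|)\bigr)^{1-2^{-\ell}}\ge(1-|s|)^{1-2^{-\ell}}$, and assembling these pieces gives $g_\ell'(t)\le\sqrt2\,(\pi/\sqrt2)^\ell\|x-y\|^{2-2^{-\ell+1}}$, as needed. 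The delicate points are the constant bookkeeping so that the $(\pi/\sqrt2)^\ell$ emerging from the derivative bound cancels exactly against the $(\pi/\sqrt2)^{-\ell}$ built into $\delta$, and checking that replacing the interval $[u,v]$ by the single point $v$ costs at most a bounded factor and does not degrade the exponent $2-2^{-\ell+1}$.
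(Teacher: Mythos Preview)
Your setup is exactly the paper's: choose $\delta=(\epsilon/\sqrt2)(\sqrt2/\pi)^\ell$, check $\delta<2/r^2$, read off $N=\lceil 24\log n/\delta^2\rceil=\lceil 48(\pi/\sqrt2)^{2\ell}\log n/\epsilon^2\rceil$, and reduce the theorem to the derivative bound $g_\ell'(t)\le\sqrt2\,(\pi/\sqrt2)^\ell(2-2|\langle x,y\rangle|)^{1-2^{-\ell}}$ on the interval between $u$ and $v$. The difference is in how that derivative bound is obtained.

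The paper does not unpack the product $g_\ell'(t)=(\pi/2)^\ell\prod_k\sqrt{1-g_k(t)^2}$. It applies \cref{thm:d} as a black box, which already packages the product into the single inequality
\[
g_\ell'(t)\le \tfrac{1}{\sqrt2}(\pi/\sqrt2)^\ell\,(2-2g_\ell(|t|))^{1-2^{-\ell}},
\]
and then controls only the one scalar $g_\ell(|t|)$: since $\bigl||t|-|f_\ell(\langle x,y\rangle)|\bigr|\le\delta$ and $g_\ell$ is $(\pi/2)^\ell$-Lipschitz (\cref{lem:deriv3}), one gets $g_\ell(|t|)\ge|\langle x,y\rangle|-\delta(\pi/2)^\ell\ge|\langle x,y\rangle|-\epsilon$, hence $2-2g_\ell(|t|)<2(2-2|\langle x,y\rangle|)$, and the factor $2^{1-2^{-\ell}}\le 2$ turns the prefactor $1/\sqrt2$ into exactly $\sqrt2$. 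So the constant bookkeeping you flag as ``delicate'' is a one-line computation in the paper's organization, and no case split on the sign of $s$ is needed (the absolute value inside $g_\ell(|t|)$ handles it uniformly).

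Your route---perturbing each of the $\ell$ factors $\sqrt{1-g_k(t)^2}$ from $\sqrt{1-f_{\ell-k}(s)^2}$---also works, but it is essentially re-proving \cref{thm:d} inside this proof, and tracking the accumulated multiplicative loss over $\ell$ factors to land within the allotted $\sqrt2$ is genuinely fiddly. Two small slips to fix if you pursue it: the upper bound from inside \cref{thm:d} is $1-|f_j(s)|\le(2-2|s|)^{2^{-j}}$, not $(1-|s|)^{2^{-j}}$ (with the correct version the product telescopes directly to $(2-2|s|)^{1-2^{-\ell}}$, so your last inequality $\|x-y\|^{2-2^{-\ell+1}}\ge(1-|s|)^{1-2^{-\ell}}$ is then unnecessary); and the lower bound $1-|f_{\ell-k}(s)|\ge 1-|s|$ you need for the perturbation step is \cref{lem:upper}, not \cref{lem:lower}.
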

\begin{proof}
We will actually prove the stronger result that with probability at least $(1-2/n)^\ell$, for all $x,y \in S$, $g_\ell(\langle \varphi_\ell(x),\varphi_\ell(y)\rangle)$ is equal to $\langle x,y \rangle$ up to an additive $\pm \epsilon(2-2|\langle x,y \rangle|)^{1-2^{-\ell}}$ error. This statement implies the theorem statement by the polarization identity. 

Set $\delta = (\epsilon/\sqrt{2})(\sqrt{2}/\pi)^\ell$ in \cref{thm:conditional2}. 
One can check that the assumption that $\epsilon < 1-|\langle x,y \rangle|$ for all $x,y \in S$ with $x\neq y$ implies that $\delta<2/r^2$ for all $x,y \in S$ with $x\neq y$ as required by \cref{thm:conditional2}. 
From \cref{thm:conditional2}, the map $\varphi_\ell:\Rl^d \to \frac{1}{\sqrt{N}}\{-1,1\}^{N}$  satisfies, with probability at least $(1-2/n)^\ell$, that
\begin{align}\label{eq:er}
\bigl\lvert \langle \varphi_{\ell}(x),\varphi_{\ell}(y) \rangle - f_{\ell}(\langle x,y \rangle ) \bigr\rvert < (\epsilon/\sqrt{2})(\sqrt{2}/\pi)^\ell
\end{align}
for all $x,y \in S$ where $N= \left\lceil \frac{48(\pi/\sqrt{2})^{2\ell}\log n}{\epsilon^2}\right\rceil$.
Now assume that $\varphi_\ell$ does satisfy \cref{eq:er} for all $x,y \in S$. 
Since $g_{\ell}\bigl(f_\ell(t)\bigr) = t $ for all $t \in [-1,1]$, \cref{eq:er} implies that $g_\ell(\langle \varphi_\ell(x),\varphi_\ell(y)\rangle)$ should be a good approximation of $\langle x,y\rangle$. 
In particular, we claim that for all $x,y\in S$, $g_{\ell}(\langle \varphi_\ell(x),\varphi_\ell(y)\rangle)$ is equal to $\langle x,y \rangle $ up to an additive error of $\pm \epsilon(2-2|\langle x,y \rangle|)^{1-2^{-\ell}}$.
In order to show this we first need to get a bound on the derivative of $g_\ell$ in the interval between $\langle \varphi_{\ell}(x),\varphi_{\ell}(y) \rangle$ and $f_{\ell}(\langle x,y \rangle)$.

Let $t$ be in the interval between $\langle \varphi_{\ell}(x),\varphi_{\ell}(y) \rangle$ and $f_{\ell}(\langle x,y \rangle)$. 
By \cref{thm:d}, 
\[
g_\ell'(t)\le \frac{\pi^\ell}{2^{\frac{\ell+1}{2}}}\bigl(2-2g_\ell(|t|)\bigr)^{1-2^{-\ell}}.
\]
By \cref{eq:er}, we have that $ \bigl\lvert \abs{f_\ell(\langle x,y \rangle)}-\abs{t} \bigr\rvert \le \sqrt{2}\epsilon(\frac{\sqrt{2}}{\pi})^\ell$. 
By \cref{lem:deriv3} this means that $ g_\ell\bigl(|f_\ell(\langle x,y \rangle)|\bigr) - g_\ell(|t|)  \le \sqrt{2}\epsilon(\sqrt{2}/\pi)^\ell(\pi/2)^\ell \le \epsilon $, i.e. that $g_\ell(|t|) \ge g_\ell\bigl(|f_\ell(\langle x,y \rangle)|\bigr)-\epsilon$. 
Since $f$ is an odd function, $f_\ell(t)$ is also an odd function and so $g_\ell\bigl(|f_\ell(\langle x,y \rangle)|\bigr) = g_\ell\bigl(f_\ell(|\langle x,y\rangle|)\bigr) =|\langle x,y \rangle|$. 
So we have shown that $g_\ell(|t|) \ge |\langle x,y \rangle|-\epsilon$. 
Since $\epsilon<1-|\langle x,y \rangle|$, we have that $2-2g_\ell(|t|) \le 2-2|\langle x,y \rangle|+2\epsilon <2(2-2|\langle x,y \rangle|)$. 
Using that $2^{1-2^{-\ell}}\le 2$, this means that 
\[
g_\ell'(t) 
\le  \frac{\pi^\ell}{2^{\frac{\ell+1}{2}}}\bigl(2-2g_\ell(|t|)\bigr)^{1-2^{-\ell}} 
\le \sqrt{2}\left(\frac{\pi}{\sqrt{2}}\right)^\ell(2-2|\langle x,y \rangle|)^{1-2^{-\ell}}.
\]
This bound on the derivative along with the fact that $g_\ell'(t)>0$ and \cref{eq:er} means that 
\[
\Bigl\lvert g_\ell\bigl( \langle \varphi_{\ell}(x),\varphi_{\ell}(y) \rangle\bigr) - g_\ell\bigl(f_{\ell}(\langle x,y \rangle )\bigr) \Bigr\rvert < \epsilon\bigl(2-2|\langle x,y \rangle|\bigr)^{1-2^{-\ell}}.
\]
Since $g_\ell\bigl(f_{\ell}(\langle x,y \rangle )\bigr) = \langle x,y \rangle$, the theorem follows.
\end{proof}

Now that we have established the above result, we can prove \cref{thm:sphere,thm:main}. We restate the theorems from the intro for the sake of readability. 

\sphere*
\begin{proof}
Let $\varphi_\ell:S^{d-1} \to N^{-1/2}\{-1,1\}^N$ be the map from \cref{thm:conditional2} that by \cref{thm:additive} satisfies, with probability at least $(1-2/n)^\ell$, that $2-2g_\ell(\langle \varphi_\ell(x),\varphi_\ell(y)\rangle)$ is equal to $\| x-y \|^2$ up to an additive $\pm (\epsilon/4) \|x-y\|^{2-2^{-\ell+1}}$ error for all $x,y \in S$ with $N= \left\lceil \frac{384(\pi/\sqrt{2})^{2\ell}\log n}{\epsilon^2}\right\rceil$. Now assume that $\varphi_\ell$ does satisfy this condition for all $x,y \in S$. 
We have 
\[
    \|x-y\|^{-2^{-\ell+1}}
    \le (1/m)^{2^{-\ell+1}}
    = \bigl((1/m)^{2^{-\ell}}\bigr)^2 
    \le \bigl((1/m)^{\frac{1}{\log_2(4/m)}}\bigr)^2 <4.
\]
This means that we actually estimate $\|x-y\|^2$ up to an additive $\pm \epsilon \|x-y\|^2$ error, i.e., a multiplicative $(1\pm\epsilon)$ error. We have 
\[
(\pi/\sqrt{2})^{2\ell} \le (\pi/\sqrt{2})^{2(1+\log_2 \log_2(4/m))} = (\pi^2/2)2^{2\log_2(\pi/\sqrt{2})\log_2 \log_2(4/m)} =(\pi^2/2)(\log_2(4/m))^{2\log_2(\pi/\sqrt{2})}
\]
so the result follows.
\end{proof}
The above theorem shows that given a set of points $S\subset \Rl^d$, there exists an appropriate choice of $\ell$ and $N$ so that the random map $\varphi_\ell:S^{d-1}\to N^{-1/2}\{-1,1\}^N$ satisfies, with high probability, that $\varphi_\ell(S)$ is a sketch of $S$ that allows one to recover all squared distances between points in $S$ up to a multiplicative $(1\pm\epsilon)$ error. The next theorem shows that this same sketching algorithm also works for point sets that do not necessarily consist of unit norm points provided that the sketch also stores the approximate norms of points in $S$. 

\main*
\begin{proof}
For each $x \in S$, let $n_x$ be an approximation of $\|x\|$ up to an additive $\pm \rho m^2\epsilon/48 $ error.

First we claim that in order to recover squared distances up to a multiplicative $(1\pm \epsilon)$ error, it suffices to recover squared distances to an additive $\pm\epsilon\|x\|\|y\|\|\hat x -\hat y\|^2$ error. The reason is that for any $x,y \in \Rl^d$, we can prove the inequality $\epsilon\|x\|\|y\|\|\hat x -\hat y\|^2 \le \epsilon\|x-y\|^2$ by observing that $\|x\|\|y\|\|\hat x-\hat y\|^2 = \|x\|\|y\|(2-2\langle\hat x, \hat y \rangle ) =2\|x\|\|y\| - 2\langle x, y \rangle\le \|x\|^2+\|y\|^2-2\langle x, y \rangle = \|x-y\|^2 $.  

Now let $\hat S = \{\hat x \suchthat x \in S\}$. Let $\varphi_\ell:S^{d-1} \to N^{-1/2}\{-1,1\}^N$ with $N= \Theta\Big(\frac{(\pi/\sqrt{2})^{2\ell}\log n}{\epsilon^2}\Big)$ be the map from \cref{thm:conditional2} that by \cref{thm:additive} satisfies, with probability at least $(1-2/n)^\ell$, that $g_\ell(\langle \varphi_\ell(\hat x),\varphi_\ell(\hat y)\rangle)$ is equal to $\langle \hat x,\hat y \rangle$ up to an additive $\pm (\epsilon/32)(2-2|\langle \hat x,\hat y \rangle|)^{1-2^{-\ell}}$ error for all $\hat x,\hat y \in \hat S$. Assume that $\varphi_\ell$ does satisfy this condition for all $\hat x,\hat y \in \hat S$.
Notice that this implies that  $g_\ell(\langle \varphi_\ell(\hat x),\varphi_\ell(\hat y)\rangle)$ is equal to $\langle \hat x,\hat y \rangle$ up to an additive $\pm (\epsilon/32)(2-2\langle \hat x,\hat y \rangle)^{1-2^{-\ell}}$ error for all $\hat x,\hat y \in \hat S$. 
We have
\[
(2-2\langle \hat x,\hat y \rangle)^{-2^{-\ell}} 
\le (1/m)^{2^{-\ell+1}}
= \bigl((1/m)^{2^{-\ell}}\bigr)^2 
\le \bigl((1/m)^{\frac{1}{\log_2(4/m)}}\bigr)^2 
< 4.
\]
This means that $g_\ell(\langle \varphi_\ell(\hat x),\varphi_\ell(\hat y)\rangle)$ is equal to $\langle \hat x,\hat y \rangle$ up to an additive $\pm (\epsilon/8)(2-2\langle \hat x,\hat y \rangle)$ error for all $\hat x,\hat y \in \hat S$, i.e., an  additive $\pm (\epsilon/8)\| \hat x-\hat y\|^2$ error. For any $x,y\in S$, 
\begin{align*}
n_xn_yg_\ell(\langle \varphi_\ell(\hat x),\varphi_\ell(\hat y)\rangle) 
&\le \bigl(\|x\|+\rho m^2\epsilon/48\bigr)\bigl(\|y\|+\rho m^2\epsilon/48\bigr)\bigl(\langle\hat x, \hat y \rangle +(\epsilon/8)\| \hat x-\hat y\|^2\bigr) \\
& \le \langle x,y \rangle + (\epsilon/4)\|x\|\|y\|\|\hat x-\hat y\|^2
\end{align*}
where the second inequality uses the definition of $\rho$ and $m$.\details{we have 8 terms, one gives $\langle x,y \rangle$ and one gives $(\epsilon/8)\|x\|\|y\|\| \hat x-\hat y\|^2$. The other six terms all involve at least one factor of $\rho$ so follows by def. of $\rho$.}  We can also show that 
\[n_xn_yg_\ell(\langle \varphi_\ell(\hat x),\varphi_\ell(\hat y)\rangle) \ge \langle x,y \rangle - (\epsilon/4)\|x\|\|y\|\|\hat x-\hat y\|^2.
\]
This means that $n_xn_yg_\ell(\langle \varphi_\ell(\hat x),\varphi_\ell(\hat y)\rangle)$ approximates $\langle x,y\rangle$ up to an additive $\pm(\epsilon/4)\|x\|\|y\|\|\hat x-\hat y\|^2$ error. Since $n_x$ approximates $\|x\|$ up to an additive $\pm(\epsilon/24)\min_{x,y \in S}\|x\|\|y\|\|\hat x - \hat y\|^2$ error this means that $n_x^2$ approximates $\|x\|^2$ up to at least an additive $\pm(\epsilon/4)\|x\|\|y\|\|\hat x - \hat y\|^2$ error. 
Now this means that 
\[
n_x^2+n_y^2-2n_xn_yg_\ell(\langle \varphi_\ell(\hat x),\varphi_\ell(\hat y)\rangle)
\]
approximates 
\[
\|x\|^2+\|y\|^2-2\langle x,y \rangle = \|x-y\|^2
\]
up to an additive $\pm\epsilon\|x\|\|y\|\|\hat x - \hat y\|^2$ error.

Storing all the norms of the points up to an additive $\pm \rho$ error requires $\log(1/\rho)$ bits per point.

We have 
\[
(\pi/\sqrt{2})^{2\ell} \le (\pi/\sqrt{2})^{2(1+\log_2 \log_2 r)} = (\pi^2/2)2^{2\log_2(\pi/\sqrt{2})\log_2 \log_2 r} =(\pi^2/2)(\log_2 r)^{2\log_2(\pi/\sqrt{2})}
\]
so the result follows.
\end{proof}

\appendix
\section{Technical lemmas}

\begin{theorem}\label{thm:d}
    For all $t \in [0,1]$, 
    \[
    g_\ell'\bigl(f_\ell(t)\bigr) \le \frac{\pi^\ell}{2^{\frac{\ell+1}{2}}}(2-2t)^{1-2^{-\ell}}.
    \]
    This implies that for all $t \in [0,1]$,
      \[
    g_\ell'(t) \le \frac{\pi^\ell}{2^{\frac{\ell+1}{2}}} \bigl(2-2g_\ell(t)\bigr)^{1-2^{-\ell}}.
    \]
    and that for all $t \in[-1,1]$,
    \[
    g_\ell'(t) \le \frac{\pi^\ell}{2^{\frac{\ell+1}{2}}} \bigl(2-2g_\ell(|t|)\bigr)^{1-2^{-\ell}}.
    \]
\end{theorem}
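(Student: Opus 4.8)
The plan is to let the chain rule do the work. Since $g_\ell$ is the $\ell$-fold composition of $g$ (with $g_0 = \mathrm{id}$), one has $g_\ell'(s) = \prod_{k=0}^{\ell-1} g'(g_k(s))$, and evaluating at $s = f_\ell(t)$ — where $g_k\bigl(f_\ell(t)\bigr) = g_k\bigl(f_k(f_{\ell-k}(t))\bigr) = f_{\ell-k}(t)$ because $g_k$ inverts $f_k$ on $[-1,1]$ — collapses this to $g_\ell'\bigl(f_\ell(t)\bigr) = \prod_{j=1}^{\ell} g'(f_j(t))$. Using $g'(r) = \tfrac{\pi}{2}\cos(\tfrac{\pi r}{2})$ together with $\tfrac{\pi}{2}f_j(t) = \arcsin\bigl(f_{j-1}(t)\bigr)$ and $\cos(\arcsin x) = \sqrt{1-x^2}$, each factor equals $\tfrac{\pi}{2}\sqrt{1 - f_{j-1}(t)^2}$ (the cosine is nonnegative since its argument lies in $[-\tfrac{\pi}{2},\tfrac{\pi}{2}]$), so I would first record the clean identity
\[
g_\ell'\bigl(f_\ell(t)\bigr) = \Bigl(\tfrac{\pi}{2}\Bigr)^{\!\ell}\prod_{j=0}^{\ell-1}\sqrt{1 - f_j(t)^2}.
\]

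Everything then reduces to bounding $\prod_{j=0}^{\ell-1}\bigl(1 - f_j(t)^2\bigr)$ from above, and the engine for that is a single pointwise estimate, $1 - f(u)^2 \le \sqrt{2(1-u^2)}$ for all $u\in[-1,1]$ (discussed at the end). Granting it, I would prove by induction on $j$ (taking $u = f_j(t)$ in the step; the case $j=0$ is an equality) that $1 - f_j(t)^2 \le 2^{\,1-2^{-j}}(1-t^2)^{2^{-j}}$ for all $t\in[-1,1]$, then multiply over $j = 0,\dots,\ell-1$ and evaluate the geometric sums $\sum_{j=0}^{\ell-1}2^{-j} = 2 - 2^{1-\ell}$ and $\sum_{j=0}^{\ell-1}(1-2^{-j}) = \ell - 2 + 2^{1-\ell}$ to obtain $\prod_{j=0}^{\ell-1}(1-f_j(t)^2) \le 2^{\,\ell-2+2^{1-\ell}}(1-t^2)^{\,2-2^{1-\ell}}$. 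For $t\in[0,1]$ I would then use $1-t^2 \le 2-2t$ and the trivial bound $2^{1-\ell}\le 1$ (so that $\ell-2+2^{1-\ell}\le\ell-1$); feeding this back into the identity above and taking a square root gives exactly $g_\ell'(f_\ell(t)) \le \tfrac{\pi^\ell}{2^{(\ell+1)/2}}(2-2t)^{1-2^{-\ell}}$, the first claimed inequality.

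The other two inequalities I expect to follow with no further work. For $t\in[0,1]$, set $s = g_\ell(t)$, which lies in $[0,1]$ since $g_\ell$ is an increasing bijection of $[0,1]$ onto itself; then $f_\ell(s) = t$, so applying the first inequality at $s$ gives $g_\ell'(t) = g_\ell'(f_\ell(s)) \le \tfrac{\pi^\ell}{2^{(\ell+1)/2}}\bigl(2-2g_\ell(t)\bigr)^{1-2^{-\ell}}$. And for $t\in[-1,1]$, the oddness of $g_\ell$ makes $g_\ell'$ even, so $g_\ell'(t) = g_\ell'(|t|)$ and the case just treated applies to $|t|\in[0,1]$.

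The hard part is the pointwise estimate $1 - f(u)^2 \le \sqrt{2(1-u^2)}$; the remainder is bookkeeping with geometric series, and it is essential to keep the constant here exactly $\sqrt2$ (cruder substitutes inflate the powers of $2$ and break the final bound once $\ell\ge 3$). Since both sides depend only on $|u|$, I would write $u = \sin\theta$ with $\theta\in[0,\tfrac{\pi}{2}]$, turning the claim into $1 - \tfrac{4\theta^2}{\pi^2} \le \sqrt2\cos\theta$, and then substitute $\theta = \tfrac{\pi}{2}(1-q)$ with $q\in[0,1]$ to get the elementary inequality $q(2-q) \le \sqrt2\sin(\tfrac{\pi q}{2})$ on $[0,1]$. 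This last one I would settle by monotonicity: with $h(q) = \sqrt2\sin(\tfrac{\pi q}{2}) - q(2-q)$ one has $h(0) = 0$, while $h'(q) = \tfrac{\pi}{\sqrt2}\cos(\tfrac{\pi q}{2}) - 2 + 2q$ is nonnegative on $[0,1]$ (its derivative $h''$ changes sign exactly once on $[0,1]$, from positive to negative, and $h'(0) > 0 = h'(1)$), so $h\ge 0$ throughout. This auxiliary estimate could equally well be stated as its own lemma, an upper-bound counterpart to \cref{lem:lower}.
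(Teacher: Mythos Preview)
Your argument is correct and arrives at the same bound as the paper, but the organisation and the auxiliary estimate differ. The paper first proves the one-variable inequality $1-f(t)=\tfrac{2}{\pi}\arccos t\le\sqrt{1-t}$ on $[0,1]$, iterates it to $1-f_\ell(t)\le(2-2t)^{2^{-\ell}}$, and then runs a second induction on $g_\ell'(f_\ell(t))$ via the chain rule $g_\ell'(f_\ell(t))=g'(f_\ell(t))\,g_{\ell-1}'(f_{\ell-1}(t))$, using $\sqrt{1-s^2}\le\sqrt{2}\sqrt{1-s}$ at each step. You instead write down the closed-form identity $g_\ell'(f_\ell(t))=(\pi/2)^\ell\prod_{j=0}^{\ell-1}\sqrt{1-f_j(t)^2}$ up front and bound the factors through the single inequality $1-f(u)^2\le\sqrt{2(1-u^2)}$, which packages the paper's two ingredients (the $\arccos$ bound and the $\sqrt{1-s^2}\le\sqrt2\sqrt{1-s}$ step) into one stroke; the rest is a product over $j$ and a geometric-series computation. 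Both routes are equally elementary; yours has the advantage of isolating an exact identity before any estimation, while the paper's auxiliary inequality $\tfrac{2}{\pi}\arccos t\le\sqrt{1-t}$ is a touch simpler to verify than your $q(2-q)\le\sqrt2\sin(\tfrac{\pi q}{2})$. The deductions of the second and third displayed inequalities (substitute $t\mapsto g_\ell(t)$; use that $g_\ell'$ is even) are identical to the paper's.
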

\begin{proof}
We start by proving the first claim.

We will use the inequality
\begin{align}\label{eq:comp}
1-f(t)=\frac{2}{\pi}\arccos(t)\le\sqrt{1-t} \text{  for } t\in[ 0,1],
\end{align}
which follows by finding critical points of $\sqrt{1-t}-\frac{2}{\pi}\arccos(t)$.
Now this enables us to prove by induction that 
\begin{align*}
1-f_{\ell}(t)\le (2-2t)^{2^{-\ell}}  \text{  for } t\in[ 0,1]    
\end{align*}
for all $\ell \in \mathbb{N}^+$.
The base case $\ell=1$ is (using \cref{eq:comp})
\[
1-f(t) = \frac{2}{\pi}\arccos(t)\le\sqrt{1-t} \le \sqrt{2-2t}.
\]
The induction step again uses \cref{eq:comp} and also uses that $f_{\ell-1}(t) \in [0,1]$ if $t \in [0,1]$. For all $t \in [0,1]$, we have
\begin{align*}
    1-f_\ell(t) &= 1-f(f_{\ell-1}(t))\\
    &\le \sqrt{1-f_{\ell-1}(t)}\\
    & \le \sqrt{(2-2t)^{2^{-\ell+1}}}\\
    &=(2-2t)^{2^{-\ell}}.
\end{align*}
So $1-f_{\ell}(t)\le (2-2t)^{2^{-\ell}}$ for $t\in[ 0,1]$ follows. 

Now we will prove the first theorem claim by induction. 
The base case is
\[
g'\bigl(f(t)\bigr) = \frac{\pi}{2}\cos\bigl(\arcsin(t)\bigr) = \frac{\pi}{2}\sqrt{1-t^2} \le \frac{\pi}{2}\sqrt{2-2t}.
\]
Now for the induction step, assume that the first theorem claim holds for $f_{\ell-1}$. We first need to establish that for all $t \in [-1,1]$,
\begin{align*}
    g'\bigl(f_\ell(t)\bigr) &= \frac{\pi}{2}\cos\bigl((\pi/2)f_\ell(t)\bigr)\\
    &= \frac{\pi}{2}\cos\bigl(\arcsin(f_{\ell-1}(t))\bigr)\\
    &= \frac{\pi}{2}\sqrt{1-(f_{\ell-1}(t))^2}\\
    & \le \frac{\pi}{\sqrt{2}}\sqrt{1-f_{\ell-1}(t)} && \text{ by } f_{\ell-1}(t) \le 1\\
    & \le \frac{\pi}{\sqrt{2}}(2-2t)^{2^{-\ell}}.
\end{align*}
Using the chain rule, $g_\ell'(t) = g'(t)g_{\ell-1}'\bigl(g(t)\bigr)$. So for all $t \in [0,1]$,
    \begin{align*}
      g_\ell'\bigl(f_\ell(t)\bigr) &= g'\bigl(f_\ell(t)\bigr)g_{\ell-1}'\bigl(g(f_\ell(t))\bigr)\\
      &=  g'\bigl(f_\ell(t)\bigr)g_{\ell-1}'\bigl(f_{\ell-1}(t)\bigr)\\
      &\le \frac{\pi}{\sqrt{2}}(2-2t)^{2^{-\ell}}\frac{\pi^{\ell-1}}{2^{\frac{\ell}{2}}}(2-2t)^{1-2^{-\ell+1}}\\
      &= \frac{\pi^\ell}{2^{\frac{\ell+1}{2}}}(2-2t)^{1-2^{-\ell}}. 
    \end{align*}
The second claim follows by plugging in $g_\ell(t)$ into the first claim. The third claim follows by observing that if $t \in [-1,0]$, then $g_\ell'(t) = g_\ell'(|t|)$ since $g_\ell'$ is an even function (by the fact that $g_\ell$ is odd.)
\end{proof}

\begin{lemma}\label{lem:deriv3}
    For all $t \in [-1,1]$, $0\le g_\ell'(t) \le (\pi/2)^\ell$.
\end{lemma}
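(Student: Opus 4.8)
The plan is to prove the bound $0 \le g_\ell'(t) \le (\pi/2)^\ell$ for all $t \in [-1,1]$ by induction on $\ell$, using the chain rule to peel off one layer of composition at a time. Since $g(t) = \sin(\pi t/2)$ has derivative $g'(t) = (\pi/2)\cos(\pi t/2)$, we immediately get $0 \le g'(t) \le \pi/2$ on $[-1,1]$; this handles the base case $\ell = 1$.

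For the induction step, write $g_\ell = g_{\ell-1} \circ g$ (equivalently $g \circ g_{\ell-1}$; either decomposition works, but composing on the outside with $g_{\ell-1}$ and on the inside with $g$ is convenient). By the chain rule, $g_\ell'(t) = g_{\ell-1}'\bigl(g(t)\bigr) \cdot g'(t)$. Here I would use the fact, already recorded in the excerpt, that $g:[-1,1]\to[-1,1]$, so that $g(t)$ lies in the domain $[-1,1]$ where the inductive hypothesis $0 \le g_{\ell-1}'(s) \le (\pi/2)^{\ell-1}$ applies. Combining this with $0 \le g'(t) \le \pi/2$ gives $0 \le g_\ell'(t) = g_{\ell-1}'\bigl(g(t)\bigr)g'(t) \le (\pi/2)^{\ell-1}\cdot(\pi/2) = (\pi/2)^\ell$, which completes the induction. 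The nonnegativity is automatic since both factors are nonnegative (equivalently, one can just cite that $g_\ell$ is increasing on $[-1,1]$, which is stated earlier).

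There is essentially no obstacle here: the only things one needs are that $g$ maps $[-1,1]$ into itself (so the inductive hypothesis can be applied at the point $g(t)$) and the elementary bound $|g'| \le \pi/2$, both of which are trivial. The one minor point to be careful about is making sure the domain of the inner function matches the domain of validity of the inductive bound, which is why the self-map property of $g$ is the key ingredient rather than an afterthought. I expect the proof to be three or four lines long.

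\begin{proof}
We induct on $\ell$. For $\ell = 1$, $g_1'(t) = g'(t) = \frac{\pi}{2}\cos\bigl(\frac{\pi t}{2}\bigr)$, so $0 \le g_1'(t) \le \pi/2$ for all $t \in [-1,1]$. Assume the claim holds for $\ell - 1$. Since $g_\ell = g_{\ell-1}\circ g$, the chain rule gives $g_\ell'(t) = g_{\ell-1}'\bigl(g(t)\bigr)\, g'(t)$. As $g$ maps $[-1,1]$ into $[-1,1]$, we may apply the inductive hypothesis at the point $g(t)$ to get $0 \le g_{\ell-1}'\bigl(g(t)\bigr) \le (\pi/2)^{\ell-1}$. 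Combining this with $0 \le g'(t) \le \pi/2$ yields
\[
0 \le g_\ell'(t) = g_{\ell-1}'\bigl(g(t)\bigr)\, g'(t) \le \left(\frac{\pi}{2}\right)^{\ell-1}\cdot\frac{\pi}{2} = \left(\frac{\pi}{2}\right)^{\ell}
\]
for all $t \in [-1,1]$, completing the induction.
\end{proof}
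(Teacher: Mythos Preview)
Your proof is correct and follows essentially the same approach as the paper: induction on $\ell$ combined with the chain rule and the self-map property of $g$ on $[-1,1]$. The only cosmetic difference is that the paper decomposes $g_\ell = g \circ g_{\ell-1}$ (so the inductive hypothesis is applied to the factor $g_{\ell-1}'(t)$ and the self-map property is used for $g_{\ell-1}$), whereas you use $g_\ell = g_{\ell-1}\circ g$; as you note, either order works.
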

\begin{proof}
    By induction. 
    When $\ell=1$, $g'(t) = \frac{\pi}{2}\cos(\frac{\pi t}{2})\in [0,\pi/2]$ when $t\in [-1,1]$. 
    For the induction step, we need to use the fact that $g_{\ell-1}(t) \in [-1,1]$ when $t\in [-1,1]$.
    Using this, 
    \[
    g_\ell'(t) = g'\bigl(g_{\ell-1}(t)\bigr) g_{\ell-1}'(t) =\frac{\pi}{2}\cos\left(\frac{\pi}{2}g_{\ell-1}(t)\right)g_{\ell-1}'(t) \in [0,(\pi/2)^\ell].\qedhere
    \]
\end{proof}

\begin{lemma}\label{lem:upper}
    $|f_\ell(t)|\le |t|$ for all $t \in [-1,1]$ and all $\ell \in \NN^+$.
\end{lemma}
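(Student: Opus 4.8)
The plan is to reduce the statement to the single-step inequality $f(t)\le t$ on $[0,1]$ and then bootstrap by iteration. First I would record the structural facts: $f(t)=\frac{2}{\pi}\arcsin t$ is an odd, strictly increasing bijection of $[-1,1]$ onto itself that maps $[0,1]$ into $[0,1]$; consequently every iterate $f_\ell$ is odd, increasing, and sends $[0,1]$ into $[0,1]$ (these properties are already used implicitly in the paper, e.g.\ in the proof of \cref{thm:d}). Because $f_\ell$ is odd, it suffices to prove $0\le f_\ell(t)\le t$ for $t\in[0,1]$; the case $t\in[-1,0)$ then follows from $f_\ell(t)=-f_\ell(-t)$, which gives $|f_\ell(t)|=f_\ell(|t|)\le|t|$.

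For the base case $\ell=1$ I would prove $f(t)\le t$ for $t\in[0,1]$, i.e.\ $\arcsin t\le\frac{\pi}{2}t$. The cleanest argument is convexity: since $(\arcsin)''(t)=t(1-t^2)^{-3/2}\ge0$ on $[0,1]$, the function $\arcsin$ is convex there, so its graph lies below the chord joining $(0,0)$ and $(1,\frac{\pi}{2})$, which is the line $t\mapsto\frac{\pi}{2}t$. (Equivalently, $g(t)=\sin(\frac{\pi t}{2})$ is concave on $[0,1]$ because $g''(t)=-(\pi/2)^2\sin(\frac{\pi t}{2})\le0$ there, hence $g(t)\ge t$ on $[0,1]$; applying this at $s:=f(t)\in[0,1]$ and using $g(f(t))=t$ gives $t=g(s)\ge s=f(t)$. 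This is just the Jordan-type inequality $\sin x\ge\frac{2x}{\pi}$ on $[0,\frac{\pi}{2}]$.) Since also $f(t)\ge0$ on $[0,1]$, we get $|f(t)|=f(t)\le t$ there.

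For the induction step, assume $0\le f_{\ell-1}(s)\le s$ for all $s\in[0,1]$. Fix $t\in[0,1]$ and put $s=f_{\ell-1}(t)\in[0,1]$. Applying the base-case inequality at $s$ yields $f_\ell(t)=f\bigl(f_{\ell-1}(t)\bigr)=f(s)\le s=f_{\ell-1}(t)\le t$, and $f_\ell(t)\ge0$ because $f_\ell$ maps $[0,1]$ into $[0,1]$. Hence $|f_\ell(t)|\le|t|$ on $[0,1]$, and the odd-symmetry extension noted above completes the proof for all $t\in[-1,1]$ and all $\ell\in\NN^+$.

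I do not expect a real obstacle: the whole content is the elementary estimate $\arcsin t\le\frac{\pi}{2}t$ on $[0,1]$, and everything else is bookkeeping with monotonicity, the inclusion $f_j([0,1])\subseteq[0,1]$, and parity. If one wished to avoid invoking convexity directly, one could instead check that $h(t)=\frac{\pi}{2}t-\arcsin t$ satisfies $h(0)=h(1)=0$ and $h'(t)=\frac{\pi}{2}-(1-t^2)^{-1/2}$ has a single sign change on $(0,1)$ (from $+$ to $-$), forcing $h\ge0$ on $[0,1]$; note that the inequality $\frac{2}{\pi}\arccos t\le\sqrt{1-t}$ from \eqref{eq:comp} is \emph{not} directly applicable here, since it controls $1-f(t)$ rather than $f(t)$.
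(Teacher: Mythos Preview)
Your proof is correct and follows essentially the same route as the paper: reduce by oddness to $t\in[0,1]$ and proceed by induction in $\ell$. The paper leaves the base case $\arcsin t\le\frac{\pi}{2}t$ and the induction step implicit, whereas you spell both out (via convexity of $\arcsin$ and the chain $f(s)\le s=f_{\ell-1}(t)\le t$), but the underlying argument is the same.
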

\begin{proof}
  Since $f_\ell(-t) = -f_\ell(t)$ for all $t \in [-1,1]$, it suffices to prove that $f_\ell(t) \le t$ for all $t \in [0,1]$. 
  The claim follows by induction in $\ell$. 
\end{proof}

\begin{lemma}\label{lem:lower}
For all $\ell \in \mathbb{N} $ and $t \in[-1,1]$, $1-|f_\ell(t)|\ge (1-|t|)^{(2/3)^\ell}$.
\end{lemma}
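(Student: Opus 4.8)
The plan is to reduce to a one-step estimate and then iterate. First I would note that $f$ is odd and increasing with $f([0,1])=[0,1]$, so the iterate $f_\ell$ is odd, increasing, and maps $[0,1]$ into $[0,1]$; replacing $t$ by $\abs t$ it therefore suffices to prove $1-f_\ell(t)\ge(1-t)^{(2/3)^\ell}$ for $t\in[0,1]$. I would then induct on $\ell$, with the trivial base case $\ell=0$ (the identity $1-t\ge 1-t$), or equivalently take $\ell=1$ — which is the one-step inequality below — as the base case.

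The crux is the one-step inequality $1-f(s)=\tfrac2\pi\arccos(s)\ge(1-s)^{2/3}$ for $s\in[0,1]$ (using $\arcsin+\arccos=\pi/2$). Granting it, the induction step is immediate: for $t\in[0,1]$, since $f_{\ell-1}(t)\in[0,1]$,
\[
1-f_\ell(t)=1-f\bigl(f_{\ell-1}(t)\bigr)\ge\bigl(1-f_{\ell-1}(t)\bigr)^{2/3}\ge\Bigl((1-t)^{(2/3)^{\ell-1}}\Bigr)^{2/3}=(1-t)^{(2/3)^\ell},
\]
where the second inequality uses the induction hypothesis together with the fact that $x\mapsto x^{2/3}$ is increasing on $[0,1]$.

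To establish the one-step inequality I would study $h(s):=\tfrac2\pi\arccos(s)-(1-s)^{2/3}$ on $[0,1]$, which satisfies $h(0)=h(1)=0$, and whose derivative $h'(s)=\tfrac23(1-s)^{-1/3}-\tfrac2\pi(1-s^2)^{-1/2}$ on $(0,1)$ satisfies (after clearing denominators and squaring) $h'(s)\ge0\iff\psi(s):=(1-s)^{1/3}(1+s)\ge 9/\pi^2$. Since $\psi'(s)=(1-s)^{-2/3}\tfrac{2-4s}{3}$, the function $\psi$ increases on $[0,1/2]$ and decreases on $[1/2,1]$, with $\psi(0)=1>9/\pi^2$ (as $\pi^2>9$) and $\psi(1)=0$; hence $\psi$ exceeds $9/\pi^2$ on an initial interval $[0,s^*)$ and lies below it on $(s^*,1]$ for a single crossing point $s^*\in(1/2,1)$, so $h$ increases on $[0,s^*]$ and decreases on $[s^*,1]$. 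Together with $h(0)=h(1)=0$ this forces $h\ge 0$ on $[0,1]$.

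The hard part is exactly this one-step inequality, which is tight at both endpoints $s=0$ and $s=1$; the exponent $2/3$ is dictated by the first-order behaviour at $s=0$, where $1-f(s)\approx 1-\tfrac2\pi s$ must be bounded below by $(1-s)^{2/3}\approx 1-\tfrac23 s$ (which is exactly where $\pi\ge3$ enters). Consequently no soft bound such as $\sin x\le x$ is sharp enough near $s=0$, and one is forced into the monotonicity/unimodality analysis of $\psi$ above.
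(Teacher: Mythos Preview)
Your proof is correct, and the reduction to $t\in[0,1]$ together with the induction step matches the paper exactly. The only difference lies in how the one-step inequality $\tfrac{2}{\pi}\arccos(s)\ge(1-s)^{2/3}$ is established: the paper quotes the bound $\arccos(t)\ge\tfrac{\pi(1-t)^{1/2}}{2(1+t)^{1/6}}$ from an external reference and then uses $(1+t)^{-1/6}\ge(1-t)^{1/6}$ to conclude, whereas you give a fully self-contained calculus argument by showing $h(s)=\tfrac{2}{\pi}\arccos(s)-(1-s)^{2/3}$ is unimodal with $h(0)=h(1)=0$. Your route avoids the literature citation at the cost of a short monotonicity analysis of $\psi(s)=(1-s)^{1/3}(1+s)$; the paper's route is shorter on the page but relies on a result the reader must look up. Both arguments are tight at $s=0$ and $s=1$, as you correctly observe, and the exponent $2/3$ is indeed forced by the behaviour at $s=0$ (equivalently, $\pi\ge3$).
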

\begin{proof}
Since $f_\ell(-t) = -f_\ell(t)$, it suffices to show that   $1-f_\ell(t)\ge (1-t)^{(2/3)^\ell}$ for all $t \in [0,1]$.

To prove that $1-f(t) \ge (1-t)^{2/3}$ we will use that $\arccos(t) \ge\frac{\pi(1-t)^{1/2}}{2(1+t)^{1/6}}$ for all $t \in [0,1]$ as shown in \cite[Remark 2.1]{MR3012174}. Now we have $1-f(t)  = \frac{2}{\pi}\arccos(t)\ge \frac{(1-t)^{1/2}}{(1+t)^{1/6}}$. The result now follows since $\frac{1}{(1+t)^{1/6}} \ge (1-t)^{1/6}$ for all $t \in (-1,1]$. This is the base case. The induction step uses the fact that for all $t \in [0,1]$, $f_\ell(t) \in [0,1]$. Using this and the base case proven above,
\[
1-f_{\ell+1}(t) \ge \bigl(1-f_\ell(t)\bigr)^{2/3} \ge \bigl((1-t)^{(2/3)^\ell}\bigr)^{2/3} = (1-t)^{(2/3)^{\ell+1}}.\qedhere
\]
\end{proof}

\bibliographystyle{abbrv}
\bibliography{bib}

\end{document}